\theoremstyle{plain}
\newtheorem{theorem}{Theorem}[section]
\newtheorem{lemma}[theorem]{Lemma}
\newtheorem{corollary}[theorem]{Corollary}
\newcommand{\C}{{\mathcal{C}}}
\theoremstyle{definition}
\newtheorem{definition}[theorem]{Definition}
\newtheorem{example}[theorem]{Example}
\newtheorem{remark}[theorem]{Remark}
\newcommand{\tabincell}[2]{\begin{tabular}{@{}#1@{}}#2\end{tabular}}
\journal{Designs, Codes and Cryptography}
\begin{document}

\begin{frontmatter}

\title{A new class of self-orthogonal linear codes and their applications}

\vspace{1cm}

\author{Yaozong Zhang, Dabin Zheng and Xiaoqiang Wang}
\ead{zyzsdutms@163.com, dzheng@hubu.edu.cn, waxiqq@163.com}
%{\cortext[cor]{Corresponding author.}}

\address{ Key Laboratory of Intelligent Sensing System and Security, Ministry of Education,
Hubei Key Laboratory of Applied Mathematics, Faculty of Mathematics and Statistics, Hubei University,  Wuhan, 430062, China}

\tnotetext[label1]
{The corresponding author is Dabin Zheng. This work was supported in part by the National Key Research Development Program of China (No. 2021YFA1000600) and the National Natural
Science Foundation of China under Grant Number 62272148, and in part by the Natural Science Foundation of Hubei Province of China under Grant Number 2023AFB847 and the Sunrise Program of Wuhan under the Grant Number 2023010201020419.}

\begin{abstract}
Self-orthogonal codes are a  subclass of linear codes that are contained within their dual codes. Since self-orthogonal codes are widely used in quantum codes, lattice theory and linear complementary dual (LCD) codes, they have received continuous attention and research. In this paper, we construct a class of self-orthogonal codes by using the defining-set approach, and determine their explicit weight distributions and the parameters of their duals. Some considered codes are optimal according to the tables of best codes known maintained at \cite{Grassl} and a class of almost maximum distance separable (AMDS) codes from their duals are obtained. As applications,  we obtain a class of new quantum codes, which are MDS or AMDS according to the quantum Singleton bound under certain conditions. Some examples show that  the constructed quantum codes have the better parameters than known ones maintained at \cite{Bierbrauer}.  Furthermore, a new class of LCD codes are given, which are almost optimal according to the sphere packing bound.
\end{abstract}

\begin{keyword}
Self-orthogonal code; (almost) maximum distance separable code; weight distribution; quantum code; linear complementary dual code
\end{keyword}

\end{frontmatter}

\section{Introduction}

Let $q=p^s$ be an odd prime power, $\mathbb{F}_q$ denote a finite field with $q$ elements and $\mathbb{F}_q^n$ denote the $n$-dimensional vector space over $\mathbb{F}_q$. For any vector $\mathbf{x}=(x_1,x_2,\ldots,x_n)\in\mathbb{F}_q^n$, the Hamming weight of $\mathbf{x}$ is defined as $wt(\mathbf{x})=|\{1\leq i\leq n|x_i\neq0\}|$, and the Hamming distance between two vectors $\mathbf{x}$ and $\mathbf{y}$ is defined as $d(\mathbf{x},\mathbf{y})=wt(\mathbf{x}-\mathbf{y})$. An $[n,k,d]_q$ linear code $\mathcal{C}$ over $\mathbb{F}_q$ is a $k$-dimensional subspace of $\mathbb{F}_{q}^n$, where $d=\min_{\mathbf{c}\in\mathcal{C},\mathbf{c}\neq\mathbf{0}}wt(\mathbf{c})$ denotes the minimum Hamming distance of $\mathcal{C}$. An $[n,k,d]_q$ linear code is referred to as a maximum distance separable (MDS for short) code if $d=n-k+1$, and an almost maximum distance separable (AMDS for short) code if $d=n-k$. An $[n,k,d]_q$ linear code is {\it length-optimal} if there is no $[n-1,k,d]_q$ code, and {\it dimension-optimal} if there is no $[n,k+1,d]_q$ code, and {\it distance-optimal} if there is no $[n,k,d+1]_q$ code. In addition, an $[n,k,d]_q$ linear code is called an {\it optimal code} if it is length-optimal or dimension-optimal or distance-optimal. The well-known MDS conjecture states that if there is a nontrivial $[n,k]_q$ MDS code over $\mathbb{F}_q$, then $n\leq q+1$ for any odd $q$ \cite{Huffman2003}. In other words, an $[n,k,n-k]_q$ AMDS code whose length $n>q+1$ is an optimal code for any odd $q$.

Let $A_{i}=|\{\mathbf{c}\in\mathcal{C}|wt(\mathbf{c})=i,0\leq i\leq n\}|$ denote the number of codewords in $\mathcal{C}$ with Hamming weight $i$. We refer to the sequence $(1,A_1,\ldots,A_n)$ as the {\it weight distribution} of $\mathcal{C}$ and the polynomial $A(x)=1+A_1x+\cdots+A_nx^n$ as the {\it weight enumerator} of $\mathcal{C}$. The weight distribution of linear codes not only determines their error correction capability, but also assesses the probability of error detection and correction~\cite{Klove2007}. The weight distributions of linear codes have been extensively studied in the literature \cite{Ding2020,Gao2023,Gao2022,Heng2017,Heng2020,Liu2019,Sun2024,Zheng2015,Zheng2021} and references therein. A linear code is referred to as a {\it $t$-weight code} if $|\{A_i|A_i\neq0,1\leq i\leq n\}|=t$. Linear codes with few weights have consistently been a prominent research topic in coding theory duo to their intriguing and valuable applications in constructing secret sharing schemes \cite{Li2017,Yang2017}, association schemes \cite{Luo2018,Yang2017} and strongly regular graphs \cite{Heng2017,Zheng2021}.

Let $\mathcal{C}$ be an $[n,k,d]_q$ linear code. The dual code of $\mathcal{C}$ is defined by
$$\mathcal{C}^\bot=\{\mathbf{a}\in\mathbb{F}_q^n: \langle\mathbf{a},\mathbf{c}\rangle=0~{\rm for~all}~\mathbf{c}\in\mathcal{C}\},$$
where $\langle\cdot,\cdot\rangle$ denotes the standard inner product. Clearly, the dimension of $\mathcal{C}^\bot$ is equal to $n-k$. A linear code $\mathcal{C}$ is {\it self-orthogonal} if $\mathcal{C}\subseteq\mathcal{C}^\bot$. Self-orthogonal codes have valuable applications in various fields, such as quantum codes \cite{Calderbank1997}, lattices theory \cite{Wan}, LCD codes \cite{Heng2023,Heng12024}, row-self-orthogonal matrices \cite{Massey1992,Massey1998} and so on. Consequently, self-orthogonal codes have received extensive attention and in-depth research.
A linear code $\mathcal{C}$ over $\mathbb{F}_q$ is {\it $p$-divisible} if all codewords of $\mathcal{C}$ have Hamming weights that are divisible by  $p$. In \cite{Huffman2003}, authors established a necessary and sufficient condition for cyclic codes being self-orthogonal and derived some results to determine whether a binary or ternary linear codes are self-orthogonal based on the divisibility of their weights.
The following lemma presents a sufficient condition for a class of divisible codes being self-orthogonal.

\begin{lemma}\cite{Li2023}
\label{le2.1}
Let $\mathcal{C}$ be an $[n,k,d]$ linear code over $\mathbb{F}_{p^s}$ and $\mathbf{1}\in\mathcal{C}$, where $\mathbf{1}$ is the all-$1$ vector of length $n$. If $\mathcal{C}$ is $p$-divisible, then $\mathcal{C}$ is self-orthogonal.
\end{lemma}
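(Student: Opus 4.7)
The plan is a polarization argument applied to the quadratic form $Q(\mathbf{c}):=\sum_{i=1}^n c_i^2\in\mathbb{F}_{p^s}$. If I can show that $Q(\mathbf{c})=0$ for every $\mathbf{c}\in\mathcal{C}$, then for any $\mathbf{u},\mathbf{v}\in\mathcal{C}$ linearity gives $\mathbf{u}+\mathbf{v}\in\mathcal{C}$, and expanding
\[
0 \;=\; Q(\mathbf{u}+\mathbf{v}) \;=\; Q(\mathbf{u}) + 2\langle\mathbf{u},\mathbf{v}\rangle + Q(\mathbf{v}) \;=\; 2\langle\mathbf{u},\mathbf{v}\rangle
\]
forces $\langle\mathbf{u},\mathbf{v}\rangle=0$, since $p$ is odd throughout the paper and $2$ is invertible in $\mathbb{F}_{p^s}$.

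The heart of the argument, and the step where $\mathbf{1}\in\mathcal{C}$ is essential, is therefore the verification that $Q$ vanishes on $\mathcal{C}$. For each $a\in\mathbb{F}_{p^s}$, set $n_\mathbf{c}(a):=|\{i:c_i=a\}|$. Since $\mathbf{c}-a\mathbf{1}\in\mathcal{C}$ is $p$-divisible with Hamming weight $n-n_\mathbf{c}(a)$, I get $n_\mathbf{c}(a)\equiv n\pmod p$ for every $a\in\mathbb{F}_{p^s}$. Summing this congruence over all $p^s$ values of $a$ yields $n\equiv p^s\cdot n\equiv 0\pmod p$, whence $n_\mathbf{c}(a)\equiv 0\pmod p$ for every $a$.

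Once every multiplicity $n_\mathbf{c}(a)$ is a multiple of the characteristic $p$, the decomposition $Q(\mathbf{c})=\sum_{a\in\mathbb{F}_{p^s}}n_\mathbf{c}(a)\cdot a^2$ has every summand equal to $0$ in $\mathbb{F}_{p^s}$, so $Q(\mathbf{c})=0$ as required. The main thing to keep straight, and the only place I expect to have to be careful, is the bridge between the two kinds of arithmetic in play: weights and multiplicities live in $\mathbb{Z}$ while the inner product lives in $\mathbb{F}_{p^s}$, and the reduction from the former to the latter works only because an integer multiple of the characteristic kills any element of the field. This also pinpoints why the oddness of $p$ is exactly what the polarization step needs; with that in hand the lemma follows in a few lines.
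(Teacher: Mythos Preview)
Your argument is correct. The polarization step is clean, the counting of $n_{\mathbf{c}}(a)$ via the weights of $\mathbf{c}-a\mathbf{1}$ is exactly where the hypothesis $\mathbf{1}\in\mathcal{C}$ enters, and the passage from the integer congruences $n_{\mathbf{c}}(a)\equiv 0\pmod p$ to $Q(\mathbf{c})=0$ in $\mathbb{F}_{p^s}$ is handled carefully.

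As for comparison: the paper does not prove this lemma at all---it is quoted from \cite{Li2023} and used as a black box. Your write-up therefore supplies something the paper omits. The approach you take (showing all symbol multiplicities are $\equiv 0\pmod p$, then polarizing) is the standard one and is essentially what appears in the cited reference, so there is no methodological divergence to discuss.
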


Recently, Heng et al. \cite{Heng2024} constructed a class of linear codes using the {\it augmentation technique} as follows:
\begin{equation*}
\begin{aligned}
\mathcal{C}_D=\left\{\mathbf{c}(a;c)=({\rm Tr}_1^{m}(ax)_{x\in D}+c\mathbf{1}:a\in\mathbb{F}_{q^m},c\in\mathbb{F}_q\right\},
\end{aligned}
\end{equation*}
where $D=\{x\in\mathbb{F}_{q^m}:{\rm Tr}_{1}^t(x^N)=0\}$ is the defining-set with some positive integer $N$, and $\mathbf{1}$ is the all-$1$ vector of length $|D|$.
They determined the weight distribution of $\mathcal{C}_D$ and obtained some self-orthogonal codes for some special $N$. Inspired by the work of Heng et al., we consider  a class of linear codes over $\mathbb{F}_{p^{s_2}}$
as follows:
\begin{equation}\label{eq1.4}
\begin{aligned}
\mathcal{C}_D=\left\{\mathbf{c}(a,b;c)=({\rm Tr}_{s_2}^s(ax+by)_{(x,y)\in D}+c\mathbf{1}:a,b\in\mathbb{F}_{p^s},c\in\mathbb{F}_{p^{s_2}}\right\},
\end{aligned}
\end{equation}
where the defining-set is given by
\begin{equation}\label{eq1.5}
\begin{aligned}
D=\left\{(x,y)\in(\mathbb{F}_{p^s},\mathbb{F}_{p^s}):{\rm Tr}_{s_1}^s(x^2+y^2)=0\right\},
\end{aligned}
\end{equation}
and $s_1, s_2$ are positive divisors of $s$. In this paper, by using Gauss sums to analyze the solutions of some equations over finite fields, we determine the weight distribution of the linear code $\mathcal{C}_D$ defined by Eq.(\ref{eq1.4}) in the condition $s_2\,|\,s_1$ or $s_1\,|\,s_2$ and obtain a class of self-orthogonal codes under certain conditions. Furthermore, we determine the parameters of dual code $\mathcal{C}_D^\bot$ of $\mathcal{C}_D$, and show that $\mathcal{C}_D^\bot$ is a class of AMDS codes if $s=s_2$. Some examples of the considered codes and their duals
have best parameters according to the tables maintained at at \cite{Grassl}. As applications, we obtain a new class of quantum codes, which are MDS or AMDS according to the quantum Singleton bound under certain conditions. Some examples show that the constructed quantum codes have better parameters than known ones maintained at \cite{Bierbrauer}. Furthermore, we obtain a new class of LCD codes, which are almost optimal according to the sphere packing bound.

The remainder of this paper is organized as follows. In Section 2, we introduce some basic concepts and important conclusions over finite fields. In Section 3, we give some auxiliary lemmas that are necessary for the subsequent sections. In Section 4, we determine the weight distributions of a class of self-orthogonal codes $\mathcal{C}_D$, and study their dual codes. In Section 5, we construct a class of quantum and LCD codes with new parameters by
applying the constructed self-orthogonal codes. Finally, Section 6 concludes the paper.

\section{Preliminaries}
Throughout this paper, we adopt the following notation unless otherwise stated:
\begin{enumerate}
  \item [$\bullet$] $q=p^s$ is an odd prime power for a positive integer $s$ and $s_1,s_2$ are two positive divisors of $s$.
  \item [$\bullet$] $\mathbb{F}_{p^s}$ is a finite field with $p^s$ elements and $\mathbb{F}_{p^s}^*=\mathbb{F}_{p^s}\backslash\{0\}$ is the multiplicative group of $\mathbb{F}_{p^s}$.
  \item [$\bullet$] ${\rm Tr}_{a}^b$ denotes the trace function from $\mathbb{F}_{p^b}$ to $\mathbb{F}_{p^a}$, where $\mathbb{F}_{p^a}$ is a subfield of $\mathbb{F}_{p^b}$.
  \item [$\bullet$] $\chi,\chi'$ and $\chi''$ denote the additive characters over $\mathbb{F}_{p^s},\mathbb{F}_{p^{s_1}}$ and $\mathbb{F}_{p^{s_2}}$, respectively.
  \item [$\bullet$] $\eta,\eta'$ and $\eta''$ denote the quadratic characters over $\mathbb{F}_{p^s},\mathbb{F}_{p^{s_1}}$ and $\mathbb{F}_{p^{s_2}}$, respectively.
  \item [$\bullet$] $\mathcal{G},\mathcal{G}'$ and $\mathcal{G}''$ denote the values of quadratic Gaussian sums over $\mathbb{F}_{p^s},\mathbb{F}_{p^{s_1}}$ and $\mathbb{F}_{p^{s_2}}$, respectively.
\end{enumerate}

Let $\mathbb{F}_{q}$ be a finite field with $q$ elements. For any $a\in\mathbb{F}_{q}$, a function $\chi_a$ from $\mathbb{F}_{q}$ into the complex numbers defined by
$$\chi_a(x)=\zeta_p^{{\rm Tr}_{1}^s(ax)}$$
is called an {\it additive character} of $\mathbb{F}_{q}$, where ${\rm Tr}_{1}^s( \cdot )$ is the trace function from $\mathbb{F}_q$ to $\mathbb{F}_p$ and $\zeta_p=e^{\frac{2\pi\sqrt{-1}}{p}}$ is a primitive $p$-th root of unity. The $\chi_1$ is called the canonical additive character of $\mathbb{F}_q$. It is easy to show that $\chi_a(x_1+x_2)=\chi_a(x_1)\chi_a(x_2)$ for $x_1,x_2\in\mathbb{F}_q$. Moreover,
$$\sum_{x\in\mathbb{F}_q^*}\chi_a(x)=\left\{
                                     \begin{array}{ll}
                                       q-1, & \hbox{if $a=0$;} \\
                                       -1, & \hbox{if $a\in\mathbb{F}_q^*$.}
                                     \end{array}
                                   \right.
$$

Let $\omega$ be a fixed primitive element of $\mathbb{F}_q$. For each $j=0,1,\ldots,q-1$, a function $\psi_j$ defined by
$$\psi_j(\omega^k)=e^{\frac{2\pi\sqrt{-1}jk}{q-1}}~{\rm for}~k=0,1,\ldots, q-2, $$
is called a {\it multiplicative character} of $\mathbb{F}_q^*$. For completeness of the definition, we stipulate that $\psi_j(0)=0$. For each $j$, it is obvious that $\psi_j(x_1x_2)=\psi_j(x_1)\psi_j(x_2)$ for $x_1, x_2\in\mathbb{F}_q^*$. In particular, for odd $q$ the function $\eta:=\psi_{\frac{q-1}{2}}(\cdot)$ is called the quadratic character of $\mathbb{F}_q^*$.  It is easy to see that
$$\eta(a)=\left\{
            \begin{array}{ll}
              1, & \hbox{if $a\in \mathbb{F}_q^*$ is a square}; \\
              -1, & \hbox{if $a\in \mathbb{F}_q^*$ is a non-square}.
            \end{array}
          \right.
$$
For the convenience of narration, we regulate that $\eta(0)=0$. Let $\mathbb{F}_{p^{s_1}}$ be a subfield of $\mathbb{F}_{p^{s_2}}$. By the definition of $\eta''$ and $\eta'$, for any $b\in\mathbb{F}_{p^{s_1}}$, it is well-known that $\eta''(b)=\eta'(b)$ if $\frac{s_2}{s_1}$ is odd and $\eta''(b)=1$ if $\frac{s_2}{s_1}$ is even. In the following sections of this paper, this conclusion will be repeatedly used.

Let $\psi$ be a multiplicative and $\chi$ be an additive character of $\mathbb{F}_q$. The {\it Gaussian sum} $\mathcal{G}(\psi,\chi)$ is defined by
$$\mathcal{G}(\psi,\chi)=\sum_{c\in\mathbb{F}_q^*}\psi(c)\chi(c).$$
In general, determining the exact value of the Gauss sum $\mathcal{G}(\psi,\chi)$ is quite challenging. However, if $\psi=\eta$, then the quadratic Gaussian sum $\mathcal{G}(\eta,\chi)$  can be explicitly calculated
as follows:
\begin{lemma}\cite{Lidl1997}
Let $\eta$ be a quadratic character and $\chi$ be an additive character over $\mathbb{F}_{p^s}$. The value of $\mathcal{G}(\eta,\chi)$ is given as follows:
\begin{equation*}
\begin{aligned}
\mathcal{G}(\eta, \chi)=(-1)^{s-1}(\sqrt{-1})^{(\frac{p-1}{2})^2s}\sqrt{p^s}=\left\{
                                                                    \begin{array}{ll}
                                                                      (-1)^{s-1}\sqrt{p^s}, & \hbox{if $p\equiv1~({\rm mod}~4)$;} \\
                                                                      (-1)^{s-1}(\sqrt{-1})^s\sqrt{p^s}, & \hbox{if $p\equiv3~({\rm mod}~4)$.}
                                                                    \end{array}
                                                                  \right.
\end{aligned}
\end{equation*}
\end{lemma}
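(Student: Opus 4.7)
The plan is to reduce to the prime field case and then lift to $\mathbb{F}_{p^s}$ via the Hasse--Davenport relation. Concretely, let $\eta_1$ and $\chi_1$ denote the quadratic character and canonical additive character of $\mathbb{F}_p$, and write $\mathcal{G}_1 = \mathcal{G}(\eta_1, \chi_1)$. First I would verify the character identity $\eta = \eta_1 \circ N_{s/1}$, where $N_{s/1}\colon \mathbb{F}_{p^s}^* \to \mathbb{F}_p^*$ is the norm map; this follows because $N_{s/1}(x) = x^{(p^s-1)/(p-1)}$ and hence $\eta_1(N_{s/1}(x)) = x^{(p^s-1)/2} = \eta(x)$. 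Combined with $\chi(x) = \chi_1(\mathrm{Tr}_1^s(x))$, this puts us in exactly the setting of Hasse--Davenport, which yields
\begin{equation*}
\mathcal{G}(\eta,\chi) = (-1)^{s-1}\,\mathcal{G}_1^{\,s}.
\end{equation*}

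For the base case $s=1$, the plan is to evaluate $\mathcal{G}_1$ by the usual two-step argument: first show $\mathcal{G}_1^2 = \eta_1(-1)\,p = (-1)^{(p-1)/2} p$ by manipulating $\mathcal{G}_1 \cdot \overline{\mathcal{G}_1}$ and using $\overline{\mathcal{G}_1} = \eta_1(-1)\,\mathcal{G}_1$, which gives $\mathcal{G}_1 = \pm \sqrt{p}$ when $p \equiv 1 \pmod 4$ and $\mathcal{G}_1 = \pm i\sqrt{p}$ when $p \equiv 3 \pmod 4$. The sign determination is the classical and hardest part; I would quote Gauss's sign theorem (the standard textbook result of Lidl and Niederreiter), so that $\mathcal{G}_1 = \sqrt{p}$ in the first case and $\mathcal{G}_1 = i\sqrt{p}$ in the second.

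Finally, I would substitute these values into $\mathcal{G}(\eta,\chi) = (-1)^{s-1}\mathcal{G}_1^{\,s}$ and match the unified expression $(-1)^{s-1}(\sqrt{-1})^{((p-1)/2)^2 s}\sqrt{p^s}$. The verification splits according to $p \bmod 4$: when $p \equiv 1 \pmod 4$, $((p-1)/2)^2$ is even so the factor $(\sqrt{-1})^{((p-1)/2)^2 s}$ equals $1$ and we recover $(-1)^{s-1}\sqrt{p^s}$; when $p \equiv 3 \pmod 4$, $((p-1)/2)^2$ is odd so $(\sqrt{-1})^{((p-1)/2)^2 s} = (\sqrt{-1})^s$ and we recover $(-1)^{s-1}(\sqrt{-1})^s \sqrt{p^s}$, matching the two cases in the lemma.

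The main obstacle is, as is well known, the sign determination in the base case $s = 1$; the modulus $|\mathcal{G}_1| = \sqrt{p}$ is easy but distinguishing $+\sqrt{p}$ from $-\sqrt{p}$ (and $+i\sqrt{p}$ from $-i\sqrt{p}$) requires a genuinely nontrivial argument such as Schur's matrix-diagonalization trick or a theta-function evaluation. Since this is a standard fact, for this paper it suffices to invoke it from Lidl and Niederreiter; once it is available, the Hasse--Davenport lift and the mod-$4$ bookkeeping are straightforward.
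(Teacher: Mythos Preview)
The paper does not give its own proof of this lemma; it simply cites the result from Lidl and Niederreiter and moves on. Your outline via Hasse--Davenport lifting together with the classical sign determination for the prime-field quadratic Gauss sum is correct and is in fact essentially the argument given in that reference (Theorems~5.12--5.15 there), so there is nothing to compare: you have supplied the standard proof that the paper merely quotes.
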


For the convenience and simplicity of subsequent sections, we let  $\mathcal{G}=(-1)^{s-1}(\sqrt{-1})^{(\frac{p-1}{2})^2s}\sqrt{p^s}$,  $\mathcal{G}' =(-1)^{s_1-1}(\sqrt{-1})^{(\frac{p-1}{2})^2s_1}\sqrt{p^{s_1}}$ and
 $\mathcal{G}'' =(-1)^{s_2-1}(\sqrt{-1})^{(\frac{p-1}{2})^2s_2}\sqrt{p^{s_2}}$.

\begin{lemma}\cite{Lidl1997}\label{le2.3}
Let $\eta$ be a quadratic character and $\chi$ be an additive character over $\mathbb{F}_q$. Let $h(x)=h_2x^2+h_1x+h_0\in\mathbb{F}_q[x]$ with $h_2\neq 0$. Then
$$\sum_{c\in\mathbb{F}_q}\chi(h(c))=\chi\left(h_0-h_1^2(4h_2)^{-1}\right)\eta(h_2)\mathcal{G}(\eta,\chi).$$
\end{lemma}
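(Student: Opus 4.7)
The plan is to reduce the evaluation of $\sum_{c\in\mathbb{F}_q}\chi(h(c))$ to the quadratic Gauss sum $\mathcal{G}(\eta,\chi)$ by completing the square. Since $q=p^s$ is odd, $2$ (and hence $4h_2$) is invertible in $\mathbb{F}_q$, so I can write
\[
h(x)=h_2\left(x+\tfrac{h_1}{2h_2}\right)^{2}+h_0-\tfrac{h_1^{2}}{4h_2}.
\]
Applying the multiplicativity of $\chi$ and translating the summation variable to $u=x+h_1/(2h_2)$, which is a bijection of $\mathbb{F}_q$, gives
\[
\sum_{c\in\mathbb{F}_q}\chi(h(c))=\chi\!\left(h_0-\tfrac{h_1^{2}}{4h_2}\right)\sum_{u\in\mathbb{F}_q}\chi(h_2 u^{2}).
\]
So the task reduces to proving $\sum_{u\in\mathbb{F}_q}\chi(h_2 u^2)=\eta(h_2)\,\mathcal{G}(\eta,\chi)$.

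For that inner sum I would use the standard fact that for odd $q$ every nonzero square has exactly two square roots, so as $u$ ranges over $\mathbb{F}_q^{*}$, each $v\in\mathbb{F}_q^{*}$ is hit by the map $u\mapsto u^2$ precisely $1+\eta(v)$ times. Splitting off $u=0$ and rewriting,
\[
\sum_{u\in\mathbb{F}_q}\chi(h_2 u^{2})=1+\sum_{v\in\mathbb{F}_q^{*}}(1+\eta(v))\chi(h_2 v)
= 1+\sum_{v\in\mathbb{F}_q^{*}}\chi(h_2 v)+\sum_{v\in\mathbb{F}_q^{*}}\eta(v)\chi(h_2 v).
\]
Since $h_2\neq 0$, the first nontrivial sum equals $-1$ by the orthogonality relation recalled in the preliminaries. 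In the remaining sum I substitute $w=h_2v$; using $\eta(h_2^{-1}w)=\eta(h_2)\eta(w)$ (as $\eta$ is a character of order $2$ with $\eta(h_2^{-1})=\eta(h_2)$), it becomes $\eta(h_2)\sum_{w\in\mathbb{F}_q^{*}}\eta(w)\chi(w)=\eta(h_2)\,\mathcal{G}(\eta,\chi)$. Combining everything yields the claimed identity.

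The only potentially tricky step is the counting argument that recasts $\sum_{u}\chi(h_2 u^{2})$ as a sum against the quadratic character; after that, everything is a bookkeeping exercise using the orthogonality of additive characters, the multiplicativity of $\eta$, and the definition of $\mathcal{G}(\eta,\chi)$. Completing the square is harmless because $p$ is odd, so no case distinction on the characteristic is needed.
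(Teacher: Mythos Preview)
Your argument is correct and is precisely the standard completing-the-square proof (Theorem~5.33 in Lidl--Niederreiter); the only implicit assumption you use, that $\chi$ is nontrivial so that $\sum_{v\in\mathbb{F}_q^{*}}\chi(h_2v)=-1$, matches the paper's convention. The paper itself does not supply a proof but simply cites \cite{Lidl1997}, so there is nothing further to compare.
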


The following is a well-known bound for linear codes.
\begin{lemma}\cite[The sphere packing bound]{Huffman2003}\label{The sphere packing bound}
Let $\mathcal{C}$ be an $[n,k,d]_q$ linear code. Then
$$q^n\geq q^k\sum_{i=0}^{\lfloor\frac{d-1}{2}\rfloor}{n\choose i}(q-1)^i,$$
where $\lfloor\frac{d-1}{2}\rfloor$ denotes the largest integer less than or equal to $\frac{d-1}{2}$.
\end{lemma}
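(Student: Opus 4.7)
The plan is to use the classical Hamming ball packing argument. Set $t=\lfloor(d-1)/2\rfloor$ and, for each vector $\mathbf{v}\in\mathbb{F}_q^n$, define the closed Hamming ball $B_t(\mathbf{v})=\{\mathbf{u}\in\mathbb{F}_q^n:d(\mathbf{u},\mathbf{v})\leq t\}$. My first step is a direct counting argument: a vector at Hamming distance exactly $i$ from $\mathbf{v}$ is obtained by choosing $i$ of the $n$ coordinate positions in $\binom{n}{i}$ ways and altering each chosen coordinate to one of $q-1$ distinct nonzero values, so
$$|B_t(\mathbf{v})|=\sum_{i=0}^{t}\binom{n}{i}(q-1)^i,$$
independent of the center $\mathbf{v}$.

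Next, I would invoke the minimum-distance hypothesis to show that the balls $\{B_t(\mathbf{c}):\mathbf{c}\in\mathcal{C}\}$ are pairwise disjoint. Suppose to the contrary that some $\mathbf{u}\in\mathbb{F}_q^n$ lies in both $B_t(\mathbf{c}_1)$ and $B_t(\mathbf{c}_2)$ for distinct codewords $\mathbf{c}_1,\mathbf{c}_2\in\mathcal{C}$. Since the Hamming distance satisfies the triangle inequality, we would obtain
$$d(\mathbf{c}_1,\mathbf{c}_2)\leq d(\mathbf{c}_1,\mathbf{u})+d(\mathbf{u},\mathbf{c}_2)\leq 2t\leq d-1,$$
contradicting the fact that $d=\min_{\mathbf{c}\in\mathcal{C},\mathbf{c}\neq\mathbf{0}}wt(\mathbf{c})$ is the minimum distance between any two distinct codewords of the linear code $\mathcal{C}$.

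Finally, since $\mathcal{C}$ has $q^k$ codewords and the $q^k$ disjoint balls each of size $\sum_{i=0}^{t}\binom{n}{i}(q-1)^i$ are all contained in $\mathbb{F}_q^n$, summing the cardinalities yields
$$q^k\sum_{i=0}^{t}\binom{n}{i}(q-1)^i=\sum_{\mathbf{c}\in\mathcal{C}}|B_t(\mathbf{c})|=\Bigl|\bigcup_{\mathbf{c}\in\mathcal{C}}B_t(\mathbf{c})\Bigr|\leq|\mathbb{F}_q^n|=q^n,$$
which is precisely the claimed inequality. I do not anticipate any substantive obstacle; the only point requiring care is the triangle-inequality step, and that is immediate once $2t\leq d-1$ is observed. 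The argument is completely standard and works verbatim for any $[n,k,d]_q$ linear code, and in fact for any code of minimum distance $d$, linear or not.
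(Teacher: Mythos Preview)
Your argument is correct and is exactly the classical Hamming-ball packing proof of the sphere packing bound. The paper does not give its own proof of this lemma---it is simply quoted from \cite{Huffman2003}---so there is nothing further to compare.
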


At the end of this section, we present the Pless power moments for linear codes, which are useful for determining the parameters of the duals of the proposed linear codes. Let $\mathcal{C}$ be an $[n,k,d]_q$ linear code, and denote its dual by $\mathcal{C}^\bot$. Let $(1,A_1,\ldots,A_n)$ and $(1,A_1^\bot,\ldots,A_n^\bot)$ represent the weight distributions of $\mathcal{C}$ and $\mathcal{C}^\bot$, respectively. The first four Pless power moments are given as follows \cite[Page~259]{Huffman2003}:
\begin{equation*}
\begin{aligned}
&\sum_{j=0}^nA_j=q^k;\\
&\sum_{j=0}^njA_j=q^{k-1}(qn-n-A_1^\bot);\\
&\sum_{j=0}^nj^2A_j=q^{k-2}[(q-1)n(qn-n+1)]-(2qn-q-2n+2)A_1^\bot+2A_2^\bot];\\
&\sum_{j=0}^nj^3A_j=q^{k-3}[(q-1)n(q^2n^2-2qn^2+3qn-q+n^2-3n+2)-(3q^2n^2-3q^2n-6qn^2\\
&~~~~~~~~~~~~~~+12qn+q^2-6q+3n^2-9n+6)A_1^\bot+6(qn-q-n+2)A_2^\bot-6A_3^\bot].\\
\end{aligned}
\end{equation*}
\section{Some auxiliary lemmas}
In this section, we present some auxiliary lemmas to determine the weight distribution of the linear code $\mathcal{C}_D$ defined by Eq.(\ref{eq1.4}).
\begin{lemma}\label{le3.1}
Let $\mathcal{C}_D$ be the linear code defined by Eq.(\ref{eq1.4}). Then the code length $n$ of $\mathcal{C}_D$ is equal to $\frac{p^{2s}+(p^{s_1}-1)\mathcal{G}^2}{p^{s_1}}$.
\end{lemma}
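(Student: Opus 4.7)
The natural plan is to count $|D|$ by standard character orthogonality on the subfield $\mathbb{F}_{p^{s_1}}$, then collapse the inner sum with Lemma~\ref{le2.3}. Writing $\chi'$ for the canonical additive character of $\mathbb{F}_{p^{s_1}}$, I would begin from the indicator identity
\[
\mathbbm{1}_{\{w=0\}} \;=\; \frac{1}{p^{s_1}}\sum_{z\in\mathbb{F}_{p^{s_1}}}\chi'(zw), \qquad w\in\mathbb{F}_{p^{s_1}},
\]
applied to $w={\rm Tr}_{s_1}^s(x^2+y^2)$. Using transitivity of the trace, $\chi'\!\bigl(z\,{\rm Tr}_{s_1}^s(x^2+y^2)\bigr)=\chi\!\bigl(z(x^2+y^2)\bigr)$ for $z\in\mathbb{F}_{p^{s_1}}$, so
\[
n=|D|=\frac{1}{p^{s_1}}\sum_{z\in\mathbb{F}_{p^{s_1}}}\sum_{x,y\in\mathbb{F}_{p^s}}\chi\!\bigl(z(x^2+y^2)\bigr).
\]

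Next I would split off the contribution of $z=0$, which gives $p^{2s}$. For $z\in\mathbb{F}_{p^{s_1}}^{*}$, the $(x,y)$-sum factorizes as $\bigl(\sum_{x}\chi(zx^2)\bigr)^2$, and Lemma~\ref{le2.3} (applied to $h(x)=zx^2$, with $h_1=h_0=0$) evaluates each factor to $\eta(z)\,\mathcal{G}$. Squaring kills the quadratic character because $\eta(z)^{2}=1$ for $z\neq 0$, leaving exactly $\mathcal{G}^{2}$ for every nonzero $z\in\mathbb{F}_{p^{s_1}}^{*}$.

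Summing over the $p^{s_1}-1$ nonzero values of $z$ and combining with the $z=0$ term yields
\[
n=\frac{1}{p^{s_1}}\Bigl(p^{2s}+(p^{s_1}-1)\mathcal{G}^{2}\Bigr),
\]
which is the claimed formula. There is no real obstacle here: the only point that demands a little care is the transitivity step, which requires $z\in\mathbb{F}_{p^{s_1}}$ (so that $z$ can be pulled inside the outer trace) and the fact that $\eta$ is the quadratic character on the full field $\mathbb{F}_{p^s}$, making the factorization of the double sum legitimate even though $z$ ranges only over the subfield.
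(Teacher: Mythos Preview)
Your proposal is correct and follows essentially the same approach as the paper: both apply additive-character orthogonality over $\mathbb{F}_{p^{s_1}}$ to the trace condition, split off the $z=0$ term, and evaluate the remaining inner sums via Lemma~\ref{le2.3}, using $\eta(z)^2=1$ for $z\neq 0$ to conclude. The only cosmetic difference is that the paper writes $\chi_z'({\rm Tr}_{s_1}^s(\cdot))=\chi_z(\cdot)$ directly without explicitly flagging the transitivity step you mention.
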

\begin{proof}
By the orthogonal property of additive characters and Lemma \ref{le2.3},
\begin{equation*}
\begin{aligned}
n=|D|&=|\left\{(x,y)\in(\mathbb{F}_{p^s},\mathbb{F}_{p^s}):{\rm Tr}_{s_1}^s(x^{2}+y^{2})=0\right\}|\\
     &=\frac{1}{p^{s_1}}\sum_{x,y\in\mathbb{F}_{p^s}}\sum_{z\in\mathbb{F}_{p^{s_1}}}\chi_{z}'({\rm Tr}_{s_1}^s(x^{2}+y^{2}))\\
     &=p^{2s-s_1}+\frac{1}{p^{s_1}}\sum_{z\in\mathbb{F}_{p^{s_1}}^*}\sum_{x,y\in\mathbb{F}_{p^s}}\chi_z(x^2+y^2)\\
&=p^{2s-s_1}+\frac{\mathcal{G}^2}{p^{s_1}}\sum_{z\in\mathbb{F}_{p^{s_1}}^*}\eta^{2}(z)\\
&=\frac{p^{2s}+(p^{s_1}-1)\mathcal{G}^2}{p^{s_1}}.
\end{aligned}
\end{equation*}
\end{proof}
\par
By the orthogonal property of additive characters, we can easily verify the following result.
\begin{lemma}\label{le3.2}
Let
$$\Omega_2=\sum_{z_2\in\mathbb{F}_{p^{s_2}}^*}\chi''_{z_2}(c)\sum_{x,y\in\mathbb{F}_{p^s}}\chi_{z_2}(ax+by),$$
where $a, b\in\mathbb{F}_{p^s}$ and $c\in\mathbb{F}_{p^{s_2}}$. Then
$$\Omega_2=
\left\{
  \begin{array}{ll}
    (p^{s_2}-1)p^{2s}, & \hbox{if $a=b=0$, $c=0$;} \\
    -p^{2s}, & \hbox{if $a=b=0$, $c\neq0$;} \\
    0, & \hbox{otherwise.}
  \end{array}
\right.
$$
\end{lemma}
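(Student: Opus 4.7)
The plan is to split into two cases based on whether $(a,b) = (0,0)$ or not, and in each case reduce to the basic orthogonality relation for additive characters.

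First I would handle the case $a = b = 0$. Here $\chi_{z_2}(ax + by) = \chi_{z_2}(0) = 1$ for every $(x,y) \in \mathbb{F}_{p^s}^2$, so the inner double sum collapses to $p^{2s}$ and
\begin{equation*}
\Omega_2 = p^{2s}\sum_{z_2\in\mathbb{F}_{p^{s_2}}^*}\chi''_{z_2}(c).
\end{equation*}
Applying the standard orthogonality identity stated in Section~2 (with $\mathbb{F}_{p^{s_2}}$ in place of $\mathbb{F}_q$), the remaining character sum is $p^{s_2}-1$ if $c=0$ and $-1$ if $c\ne 0$, giving the first two branches of the claimed formula.

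Next I would handle the case $(a,b)\ne(0,0)$. By symmetry assume $a\ne 0$. Since $z_2\in\mathbb{F}_{p^{s_2}}^*\subseteq\mathbb{F}_{p^s}^*$ (using $s_2\mid s$), the product $z_2 a$ is a nonzero element of $\mathbb{F}_{p^s}$, so the sum over $x$ factors out and vanishes by orthogonality:
\begin{equation*}
\sum_{x\in\mathbb{F}_{p^s}}\chi_{z_2}(ax) = \sum_{x\in\mathbb{F}_{p^s}}\zeta_p^{\mathrm{Tr}_1^s(z_2 a x)} = 0.
\end{equation*}
Therefore the inner double sum $\sum_{x,y}\chi_{z_2}(ax+by) = \bigl(\sum_x \chi_{z_2}(ax)\bigr)\bigl(\sum_y \chi_{z_2}(by)\bigr) = 0$ for every admissible $z_2$, so $\Omega_2 = 0$. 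The case $b\ne 0$ is symmetric.

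There is no real obstacle here; the only small subtlety worth flagging explicitly is that one should note $\chi_{z_2}$ denotes the additive character of $\mathbb{F}_{p^s}$ with parameter $z_2$, which makes sense because $\mathbb{F}_{p^{s_2}}$ is a subfield of $\mathbb{F}_{p^s}$, and hence $z_2 a \ne 0$ whenever $z_2 \ne 0$ and $a \ne 0$. Once that is observed, both cases reduce to a single application of additive-character orthogonality, matching the authors' claim that the lemma follows easily.
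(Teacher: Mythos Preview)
Your proof is correct and is exactly the argument the paper has in mind: the authors do not write out a proof but simply state that the lemma follows from the orthogonality relation for additive characters, and your two-case analysis is the natural way to unpack that remark.
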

\begin{lemma}\label{le3.3}
Let $\Gamma(s_i)={\rm Tr}_{s_i}^s(a^2+b^2)$ for $i=1, 2$ and $a,b\in\mathbb{F}_{p^s}$. Let
$$\Omega_3=\sum_{z_1\in\mathbb{F}_{p^{s_1}}^*}\sum_{z_2\in\mathbb{F}_{p^{s_2}}^*}\chi_{z_2}''(c)\sum_{x\in\mathbb{F}_{p^s}}\chi(z_1x^2+z_2ax)
\sum_{y\in\mathbb{F}_{p^s}}\chi(z_1y^2+z_2by).$$
Then the following results hold.
\par
$({\rm i})$~If $s_2|s_1$, then
$$\Omega_3=
\left\{
  \begin{array}{ll}
    (p^{s_1}-1)(p^{s_2}-1)\mathcal{G}^2, & \hbox{if~~$\Gamma(s_1)=0,c=0$;} \\
    -(p^{s_1}-1)\mathcal{G}^2, & \hbox{if~~$\Gamma(s_1)=0,c\neq0$;} \\
    -(p^{s_2}-1)\mathcal{G}^2, & \hbox{if~~$\Gamma(s_1)\neq0,c=0$;} \\
    \mathcal{G}^2, & \hbox{if~~$\Gamma(s_1)\neq0,c\neq0$.}
  \end{array}
\right.
$$
\par
$({\rm ii})$~If $s_1|s_2$ and let $\Theta={\rm Tr}_{s_1}^{s_2}\left(\frac{c^2}{\Gamma(s_2)}\right)$ for $c\in\mathbb{F}_{p^{s_2}}$, then
$$\Omega_3=
\left\{
  \begin{array}{ll}
    (p^{s_1}-1)(p^{s_2}-1)\mathcal{G}^2, & \hbox{if~~$\Gamma(s_2)=0$, $c=0$;} \\
    -(p^{s_1}-1)\mathcal{G}^2, & \hbox{\tabincell{l}{if~~$\Gamma(s_2)=0$, $c\neq0$; or~~\\$\Gamma(s_2)\neq0$, $c=0$,$\frac{s_2}{s_1}$ is odd; or\\$\Gamma(s_2)\neq0$, $c\neq0$, $\frac{s_2}{s_1}$ is odd, $\Theta=0$;}} \\
    \left(\eta''(-\Gamma(s_2))\mathcal{G}''-1\right)(p^{s_1}-1)\mathcal{G}^2, & \hbox{\tabincell{l}{if~~$\Gamma(s_2)\neq0$, $c=0$,$\frac{s_2}{s_1}$ is even; or\\$\Gamma(s_2)\neq0$, $c\neq0$, $\frac{s_2}{s_1}$ is even, $\Theta=0$;}}\\
    (\eta''(-\Gamma(s_2))\eta'(\Theta)\mathcal{G}'\mathcal{G}''-p^{s_1}+1)\mathcal{G}^2,& \hbox{if~~$\Gamma(s_2)\neq0$, $c\neq0$, $\frac{s_2}{s_1}$ is odd, $\Theta\neq0$;} \\
     (-\eta''(-\Gamma(s_2))\mathcal{G}''-p^{s_1}+1)\mathcal{G}^2,& \hbox{if~~$\Gamma(s_2)\neq0$, $c\neq0$, $\frac{s_2}{s_1}$ is even, $\Theta\neq0$.} \\
  \end{array}
\right.
$$
\end{lemma}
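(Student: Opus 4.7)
The plan is to evaluate the inner $x$- and $y$-sums using Lemma~\ref{le2.3}, then exploit transitivity of the trace to push the canonical character $\chi$ of $\mathbb{F}_{p^s}$ down to either $\chi'$ (in (i)) or $\chi''$ (in (ii)), and finally split the resulting double sum into the indicated cases. Writing $T=a^2+b^2$ and applying Lemma~\ref{le2.3} to $h(x)=z_1x^2+z_2ax$ gives
$$\sum_{x\in\mathbb{F}_{p^s}}\chi(z_1x^2+z_2ax)=\chi\!\left(-\frac{z_2^2a^2}{4z_1}\right)\eta(z_1)\,\mathcal{G},$$
and analogously for $y$. Multiplying the two factors and using $\eta(z_1)^2=1$ on $\mathbb{F}_{p^{s_1}}^*$ yields
$$\Omega_3=\mathcal{G}^2\sum_{z_1\in\mathbb{F}_{p^{s_1}}^*}\sum_{z_2\in\mathbb{F}_{p^{s_2}}^*}\chi''(z_2c)\,\chi\!\left(-\frac{z_2^2 T}{4z_1}\right).$$

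For part (i), the condition $s_2\mid s_1$ forces $z_2\in\mathbb{F}_{p^{s_1}}$, so $z_2^2/(4z_1)\in\mathbb{F}_{p^{s_1}}$ and transitivity of the trace gives $\chi(-z_2^2T/(4z_1))=\chi'(-z_2^2\Gamma(s_1)/(4z_1))$. The inner $z_1$-sum equals $p^{s_1}-1$ when $\Gamma(s_1)=0$ (the argument vanishes identically), and equals $-1$ when $\Gamma(s_1)\neq 0$, via the change of variable $z_1\mapsto -z_2^2\Gamma(s_1)/(4z_1)$ under which the argument runs over all of $\mathbb{F}_{p^{s_1}}^*$. Combining this with the outer orthogonality $\sum_{z_2\neq 0}\chi''(z_2c)=p^{s_2}-1$ or $-1$ (according to whether $c=0$) immediately gives the four values in (i).

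For part (ii) I would substitute $u=(4z_1)^{-1}\in\mathbb{F}_{p^{s_1}}^*$; since $s_1\mid s_2$ lets $u$ be absorbed into $\mathbb{F}_{p^{s_2}}$, transitivity of the trace gives $\chi(-uz_2^2T)=\chi''(-uz_2^2\Gamma(s_2))$. When $\Gamma(s_2)=0$ the $u$-sum is trivially $p^{s_1}-1$ and orthogonality in $z_2$ dispatches the first two cases. Otherwise I swap the order of summation, complete the square in $z_2$, and apply Lemma~\ref{le2.3} once more to obtain
$$\sum_{z_2\in\mathbb{F}_{p^{s_2}}^*}\chi''(cz_2-u\Gamma(s_2)z_2^2)=\chi''\!\left(\frac{c^2}{4u\Gamma(s_2)}\right)\eta''(-u\Gamma(s_2))\,\mathcal{G}''-1,$$
where the $-1$ produces the $-(p^{s_1}-1)\mathcal{G}^2$ summand visible in every remaining case.

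The main bookkeeping obstacle is the residual sum $\sum_{u\in\mathbb{F}_{p^{s_1}}^*}\eta''(u)\chi''(c^2/(4u\Gamma(s_2)))$. I would split according to the parity of $s_2/s_1$ (using $\eta''\!\mid_{\mathbb{F}_{p^{s_1}}^*}=1$ when even and $\eta'$ when odd) and then according to whether $c=0$ or, for $c\neq 0$, whether $\Theta={\rm Tr}_{s_1}^{s_2}(c^2/\Gamma(s_2))$ vanishes. After inverting $u$ and rewriting $\chi''(u\alpha)=\chi'(u\,{\rm Tr}_{s_1}^{s_2}(\alpha))$ for $u\in\mathbb{F}_{p^{s_1}}$, every subcase collapses either to a standard orthogonality relation on $\mathbb{F}_{p^{s_1}}^*$ (contributing $p^{s_1}-1$, $0$, or $-1$) or, in the single subcase $s_2/s_1$ odd with $\Theta\neq 0$, to a genuine quadratic Gauss sum over $\mathbb{F}_{p^{s_1}}$ which evaluates to $\eta'(\Theta)\,\mathcal{G}'$ after absorbing the unit $4$ into a substitution. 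Assembling these contributions with the prefactor $\mathcal{G}^2\mathcal{G}''\eta''(-\Gamma(s_2))$ and the $-(p^{s_1}-1)\mathcal{G}^2$ remainder produces the five tabulated expressions in (ii).
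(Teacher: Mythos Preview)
Your proposal is correct and follows essentially the same route as the paper: apply Lemma~\ref{le2.3} to the $x$- and $y$-sums, use trace transitivity to descend to $\chi'$ (respectively $\chi''$) in case~(i) (respectively~(ii)), and in case~(ii) with $\Gamma(s_2)\neq 0$ complete the square again in $z_2$ and then split the remaining $\mathbb{F}_{p^{s_1}}^*$-sum according to the parity of $s_2/s_1$ and the vanishing of $\Theta$. Your substitution $u=(4z_1)^{-1}$ and subsequent inversion are cosmetic repackagings of the paper's direct manipulations with $z_1$; the key identities, the use of $\eta''\!\mid_{\mathbb{F}_{p^{s_1}}^*}=\eta'$ or $1$, and the appearance of $\eta'(\Theta)\mathcal{G}'$ in the single odd--$\Theta\neq 0$ subcase all match exactly.
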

\begin{proof}
By Lemma \ref{le2.3}, we have
\begin{equation*}
\begin{aligned}
\Omega_3&=\mathcal{G}^2\sum_{z_2\in\mathbb{F}_{p^{s_2}}^*}\chi_{z_2}''(c)\sum_{z_1\in\mathbb{F}_{p^{s_1}}^*}\chi\left(-\frac{z_2^2}{4z_1}(a^2+b^2)\right).
\end{aligned}
\end{equation*}
To obtain the value of $\Omega_3$, we need to consider the following two cases.
\par
\textbf{Case 1}:~$s_2|s_1$. In this case, we have $-\frac{z_2^2}{4z_1}\in\mathbb{F}_{p^{s_1}}^*$ and
\begin{equation*}
\begin{aligned}
\Omega_3&=\mathcal{G}^2\sum_{z_2\in\mathbb{F}_{p^{s_2}}^*}\chi_{z_2}''(c)\sum_{z_1\in\mathbb{F}_{p^{s_1}}^*}\chi'\left(-\frac{z_2^2}{4z_1}\Gamma(s_1)\right).\\
\end{aligned}
\end{equation*}
If $\Gamma(s_1)=0$, then
$$\Omega_3=\mathcal{G}^2(p^{s_1}-1)\sum_{z_2\in\mathbb{F}_{p^{s_2}}^*}\chi_{z_2}''(c)=
\left\{
  \begin{array}{ll}
    (p^{s_1}-1)(p^{s_2}-1)\mathcal{G}^2, & \hbox{if $c=0$;} \\
    -(p^{s_1}-1)\mathcal{G}^2, & \hbox{if $c\neq0$.}
  \end{array}
\right.
$$
If $\Gamma(s_1)\neq0$, then
$$\Omega_3=-\mathcal{G}^2\sum_{z_2\in\mathbb{F}_{p^{s_2}}^*}\chi_{z_2}''(c)=
\left\{
  \begin{array}{ll}
    -(p^{s_2}-1)\mathcal{G}^2, & \hbox{if $c=0$;} \\
    \mathcal{G}^2, & \hbox{if $c\neq0$.}
  \end{array}
\right.
$$
\par
\textbf{Case~2}:~$s_1|s_2$. In this case, we have $-\frac{z_2^2}{4z_1}\in\mathbb{F}_{p^{s_2}}^*$ and
\begin{equation*}
\begin{aligned}
\Omega_3&=\mathcal{G}^2\sum_{z_1\in\mathbb{F}_{p^{s_1}}^*}\sum_{z_2\in\mathbb{F}_{p^{s_2}}^*}\chi''\left(z_2c-\frac{z_2^2}{4z_1}\Gamma(s_2)\right).\\
\end{aligned}
\end{equation*}

If $\Gamma(s_2)=0$, then
$$\Omega_3=\mathcal{G}^2\sum_{z_1\in\mathbb{F}_{p^{s_1}}^*}\sum_{z_2\in\mathbb{F}_{p^{s_2}}^*}\chi_{z_2}''(c)=
\left\{
  \begin{array}{ll}
    (p^{s_1}-1)(p^{s_2}-1)\mathcal{G}^2, & \hbox{if~~$c=0$;} \\
    -(p^{s_1}-1)\mathcal{G}^2, & \hbox{if~~$c\neq0$.}
  \end{array}
\right.
$$

If $\Gamma(s_2)\neq0$, then
\begin{equation}\label{Omega3}
\begin{aligned}
\Omega_3&=\mathcal{G}^2\sum_{z_1\in\mathbb{F}_{p^{s_1}}^*}\sum_{z_2\in\mathbb{F}_{p^{s_2}}}\chi''\left(z_2c-\frac{z_2^2}{4z_1}\Gamma(s_2)\right)-(p^{s_1}-1)\mathcal{G}^2\\
&=\eta''\left(-\Gamma(s_2)\right)\mathcal{G}''\mathcal{G}^2\sum_{z_1\in\mathbb{F}_{p^{s_1}}^*}\chi''\left(\frac{z_1c^2}{\Gamma(s_2)}\right)
\eta''(\frac{1}{4z_1})-(p^{s_1}-1)\mathcal{G}^2.\\
\end{aligned}
\end{equation}
If $c=0$, by Eq.(\ref{Omega3}) we have
\begin{equation*}
\begin{aligned}
\Omega_3&=
\left\{
  \begin{array}{ll}
    \left(\eta''\left(-\Gamma(s_2)\right)\mathcal{G}''\sum_{z_1\in\mathbb{F}_{p^{s_1}}^*}\eta'(\frac{1}{4z_1})-p^{s_1}+1\right)\mathcal{G}^2, & \hbox{if $\frac{s_2}{s_1}$ is odd;} \\
    \left(\eta''\left(-\Gamma(s_2)\right)\mathcal{G}''\sum_{z_1\in\mathbb{F}_{p^{s_1}}^*}1-p^{s_1}+1\right)\mathcal{G}^2, & \hbox{if $\frac{s_2}{s_1}$ is even.}
  \end{array}
\right.\\
&=\left\{
    \begin{array}{ll}
      -(p^{s_1}-1)\mathcal{G}^2, & \hbox{if $\frac{s_2}{s_1}$ is odd;} \\
      (\eta''\left(-\Gamma(s_2)\right)\mathcal{G}''-1)(p^{s_1}-1)\mathcal{G}^2, & \hbox{if  $\frac{s_2}{s_1}$ is even.}
    \end{array}
  \right.
\end{aligned}
\end{equation*}
If $c\neq0$, it is easy to check that $\chi''\left(\frac{z_1c^2}{\Gamma(s_2)}\right)=\chi_{z_1}'(\Theta)$. Then from Eq.(\ref{Omega3}) we obtain
\begin{equation}\label{Omega31}
\begin{aligned}
\Omega_3&=
\left\{
  \begin{array}{ll}
    \eta''\left(-\Gamma(s_2)\right)\mathcal{G}''\mathcal{G}^2\sum_{z_1\in\mathbb{F}_{p^{s_1}}^*}\chi_{z_1}'(\Theta)\eta'(\frac{1}{4z_1})-(p^{s_1}-1)\mathcal{G}^2, & \hbox{if $\frac{s_2}{s_1}$ is odd;} \\
    \eta''\left(-\Gamma(s_2)\right)\mathcal{G}''\mathcal{G}^2\sum_{z_1\in\mathbb{F}_{p^{s_1}}^*}\chi_{z_1}'(\Theta)-(p^{s_1}-1)\mathcal{G}^2, & \hbox{if $\frac{s_2}{s_1}$ is even.}
  \end{array}
\right.
\end{aligned}
\end{equation}
If $\Theta=0$, Eq.(\ref{Omega31}) can be written as
\begin{equation*}
\begin{aligned}
\Omega_3&=
\left\{
  \begin{array}{ll}
    -(p^{s_1}-1)\mathcal{G}^2, & \hbox{if $\frac{s_2}{s_1}$ is odd;} \\
    (\eta''\left(-\Gamma(s_2)\right)\mathcal{G}''-1)(p^{s_1}-1)\mathcal{G}^2, & \hbox{if $\frac{s_2}{s_1}$ is even.}
  \end{array}
\right.
\end{aligned}
\end{equation*}
If $\Theta\neq0$, let $Y=z_1\Theta\in\mathbb{F}_{p^{s_1}}$, then $$\sum_{z_1\in\mathbb{F}_{p^{s_1}}^*}\chi''\left(\frac{z_1c^2}{\Gamma(s_2)}\right)
\eta''(\frac{1}{4z_1})=\sum_{z_1\in\mathbb{F}_{p^{s_1}}^*}\chi_{z_1}'(\Theta)\eta'(\frac{1}{4z_1})=\eta'(\Theta)\sum_{Y\in\mathbb{F}_{p^{s_1}}^*}
\chi'(Y)\eta'(Y^{-1})=\eta'(\Theta)\mathcal{G}'$$
for $\frac{s_2}{s_1}$ being odd, and
$$\sum_{z_1\in\mathbb{F}_{p^{s_1}}^*}\chi''\left(\frac{z_1c^2}{\Gamma(s_2)}\right)
\eta''(\frac{1}{4z_1})=\sum_{z_1\in\mathbb{F}_{p^{s_1}}^*}\chi_{z_1}'(\Theta)=\sum_{Y\in\mathbb{F}_{p^{s_1}}^*}\chi'(Y)$$
for $\frac{s_2}{s_1}$ being even.
Hence, from Eq.(\ref{Omega31}) we have
\begin{equation*}
\begin{aligned}
\Omega_3=\left\{
    \begin{array}{ll}
      \eta''(-\Gamma(s_2))\eta'(\Theta)\mathcal{G}'\mathcal{G}''\mathcal{G}^2-(p^{s_1}-1)\mathcal{G}^2, & \hbox{if $c\neq0$ and $\frac{s_2}{s_1}$ is odd;} \\
      -\eta''\left(-\Gamma(s_2)\right)\mathcal{G}''G^2-(p^{s_1}-1)\mathcal{G}^2, & \hbox{if $c\neq0$ and $\frac{s_2}{s_1}$ is even}
    \end{array}
  \right.
\end{aligned}
\end{equation*}
for $\Theta\neq0$.

Combining all the cases, the desired conclusion follows.
\end{proof}

\begin{lemma}\label{le3.4}
Let $\frac{s_2}{s_1}$ be odd and
$$N(c,\rho)=|\left\{c\in\mathbb{F}_{p^{s_2}}^*:{\rm Tr}_{s_1}^{s_2}(\mu^{-1}c^2)=\rho\right\}|,$$
where $\mu\in\mathbb{F}_{p^{s_2}}^*$ and $\rho\in\mathbb{F}_{p^{s_1}}$. Then
$$
N(c,\rho)=\left\{
            \begin{array}{ll}
              p^{s_2-s_1}-1, & \hbox{if $\rho=0$;} \\
              p^{s_2-s_1}+\frac{\eta''(\mu)\eta'(-\rho)\mathcal{G}'\mathcal{G}''}{p^{s_1}}, & \hbox{if $\rho\neq0$.} \\
            \end{array}
          \right.
$$
\end{lemma}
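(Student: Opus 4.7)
The plan is to express $N(c,\rho)$ using the orthogonality of additive characters on $\mathbb{F}_{p^{s_1}}$, which will convert the counting problem into a sum of Gauss sums that can be evaluated with Lemma \ref{le2.3}. Specifically, I would first write
$$N(c,\rho)=\frac{1}{p^{s_1}}\sum_{c\in\mathbb{F}_{p^{s_2}}^*}\sum_{z\in\mathbb{F}_{p^{s_1}}}\chi'\!\left(z\bigl({\rm Tr}_{s_1}^{s_2}(\mu^{-1}c^2)-\rho\bigr)\right),$$
and then split the sum according to whether $z=0$ or $z\neq 0$. The $z=0$ term simply contributes $(p^{s_2}-1)/p^{s_1}$.

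For the $z\neq0$ part, I would use the transitivity relation $\chi'(z\,{\rm Tr}_{s_1}^{s_2}(\mu^{-1}c^2))=\chi''(z\mu^{-1}c^2)$, which holds because $z\in\mathbb{F}_{p^{s_1}}\subseteq\mathbb{F}_{p^{s_2}}$. Then Lemma \ref{le2.3} applied to the quadratic polynomial $h(c)=z\mu^{-1}c^2$ gives
$$\sum_{c\in\mathbb{F}_{p^{s_2}}}\chi''(z\mu^{-1}c^2)=\eta''(z\mu^{-1})\mathcal{G}'',$$
so after subtracting the $c=0$ contribution the inner sum over $c\in\mathbb{F}_{p^{s_2}}^*$ becomes $\eta''(z)\eta''(\mu^{-1})\mathcal{G}''-1$. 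Since $s_2/s_1$ is odd, the standard identity $\eta''(z)=\eta'(z)$ for $z\in\mathbb{F}_{p^{s_1}}^*$ recorded in Section~2 allows me to pull the quadratic character down to $\mathbb{F}_{p^{s_1}}$.

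After this reduction, the $z\neq0$ contribution splits into two elementary sums over $\mathbb{F}_{p^{s_1}}^*$:
$$\frac{\eta''(\mu^{-1})\mathcal{G}''}{p^{s_1}}\sum_{z\in\mathbb{F}_{p^{s_1}}^*}\chi'(-z\rho)\eta'(z)-\frac{1}{p^{s_1}}\sum_{z\in\mathbb{F}_{p^{s_1}}^*}\chi'(-z\rho).$$
When $\rho=0$ the first sum vanishes (nontrivial character summed over $\mathbb{F}_{p^{s_1}}^*$) and the second equals $p^{s_1}-1$, combining with the $z=0$ term to yield $p^{s_2-s_1}-1$. When $\rho\neq 0$, the substitution $y=-\rho z$ turns the first inner sum into $\eta'(-\rho)\mathcal{G}'$ (using $\eta'(\rho^{-1})=\eta'(\rho)$ and $\eta''(\mu^{-1})=\eta''(\mu)$), while the second inner sum equals $-1$, producing exactly $p^{s_2-s_1}+\eta''(\mu)\eta'(-\rho)\mathcal{G}'\mathcal{G}''/p^{s_1}$.

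I do not expect a serious obstacle here; the only subtlety is making sure the quadratic character identity $\eta''|_{\mathbb{F}_{p^{s_1}}^*}=\eta'$ under the odd-degree hypothesis is invoked correctly so that $\mathcal{G}'$ actually appears (rather than a sum that fails to be a Gauss sum on $\mathbb{F}_{p^{s_1}}$). Once that substitution is in place, the proof is a direct character-sum computation.
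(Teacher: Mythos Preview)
Your proposal is correct and follows essentially the same route as the paper's own proof: both expand $N(c,\rho)$ via additive-character orthogonality on $\mathbb{F}_{p^{s_1}}$, lift to $\chi''$ by transitivity, apply Lemma~\ref{le2.3} to the quadratic in $c$, invoke $\eta''|_{\mathbb{F}_{p^{s_1}}^*}=\eta'$ under the odd-degree hypothesis, and then evaluate the remaining sums over $\mathbb{F}_{p^{s_1}}^*$ by the same substitution $y=-\rho z$. The only cosmetic difference is notation ($\chi'(z\cdot)$ versus $\chi'_{z_1}$), so there is nothing to add.
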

\begin{proof}
By the orthogonal relation of additive characters and Lemma \ref{le2.3},
\begin{equation*}
\begin{aligned}
N(c,\rho)&=\frac{1}{p^{s_1}}\sum_{c\in\mathbb{F}_{p^{s_2}}^*}\sum_{{z_1}\in\mathbb{F}_{p^{s_1}}}\chi_{z_1}'({\rm Tr}_{s_1}^{s_2}(\mu^{-1}c^2)-\rho)\\
&=\frac{1}{p^{s_1}}\sum_{c\in\mathbb{F}_{p^{s_2}}^*}(1+\sum_{{z_1}\in\mathbb{F}_{p^{s_1}}^*}\chi_{z_1}''(\mu^{-1}c^2)\chi_{z_1}'(-\rho))\\
&=\frac{p^{s_2}-1}{p^{s_1}}+\frac{1}{p^{s_1}}\sum_{{z_1}\in\mathbb{F}_{p^{s_1}}^*}\chi_{z_1}'(-\rho)\sum_{c\in\mathbb{F}_{p^{s_2}}}\chi_{z_1}''(\mu^{-1}c^2)-
\frac{1}{p^{s_1}}\sum_{{z_1}\in\mathbb{F}_{p^{s_1}}^*}\chi_{z_1}'(-\rho)\\
&=\frac{p^{s_2}-1}{p^{s_1}}+\frac{\mathcal{G}''}{p^{s_1}}\sum_{{z_1}\in\mathbb{F}_{p^{s_1}}^*}\chi_{z_1}'(-\rho)\eta''(z_1\mu^{-1})-
\frac{1}{p^{s_1}}\sum_{{z_1}\in\mathbb{F}_{p^{s_1}}^*}\chi_{z_1}'(-\rho)\\
&=\frac{p^{s_2}-1}{p^{s_1}}+\frac{\eta''(\mu)\mathcal{G}''}{p^{s_1}}\sum_{{z_1}\in\mathbb{F}_{p^{s_1}}^*}\chi_{z_1}'(-\rho)\eta'(z_1)-
\frac{1}{p^{s_1}}\sum_{{z_1}\in\mathbb{F}_{p^{s_1}}^*}\chi_{z_1}'(-\rho).\\
\end{aligned}
\end{equation*}

If $\rho=0$, then
$\sum_{{z_1}\in\mathbb{F}_{p^{s_1}}^*}\chi_{z_1}'(-\rho)\eta'(z_1)=0~{\rm and}~\sum_{{z_1}\in\mathbb{F}_{p^{s_1}}^*}\chi_{z_1}'(-\rho)=p^{s_1}-1$. This implies that thus $N(c,\rho)=p^{s_2-s_1}-1$.

If $\rho\neq0$, then $\sum_{{z_1}\in\mathbb{F}_{p^{s_1}}^*}\chi_{z_1}'(-\rho)=-1$ and
\begin{equation*}
\begin{aligned}
\sum_{{z_1}\in\mathbb{F}_{p^{s_1}}^*}\chi_{z_1}'(-\rho)\eta'(z_1)&=\sum_{{z_1}\in\mathbb{F}_{p^{s_1}}}\chi_{z_1}'(-\rho)\eta'(z_1)=
\sum_{{z_2}\in\mathbb{F}_{p^{s_1}}}\chi'(z_2)\eta'(z_2)\eta'(-\rho^{-1})=\eta'(-\rho)\mathcal{G}',
\end{aligned}
\end{equation*}
where $z_2=-z_1\rho$. Hence, $N(c,\rho)=p^{s_2-s_1}+\frac{\eta''(\mu)\eta'(-\rho)\mathcal{G}'\mathcal{G}''}{p^{s_1}}$.
\end{proof}

\begin{lemma}\label{le3.5}
Follow the notation in Lemma \ref{le3.3}. Let
$$N(a,b)=|\left\{(x,y)\in(\mathbb{F}_{p^s},\mathbb{F}_{p^s}): {\rm Tr}_{s_1}^s(x^{2}+y^{2})=0~{\rm and}~{\rm Tr}_{s_2}^s(ax+by)+c=0\right\}|,$$
where $a,b\in\mathbb{F}_{p^s}$ and $c\in\mathbb{F}_{p^{s_2}}$. Then the following results hold.

$({\rm i})$~If $s_2|s_1$ (or $s_1|s_2$) and $(a,b)=(0,0)$, then
$$N(a,b)=
\left\{
  \begin{array}{ll}
    p^{2s-s_1}+p^{-s_1}(p^{s_1}-1)\mathcal{G}^2, & \hbox{if $c=0$;} \\
    0, & \hbox{if $c\neq0$.}
  \end{array}
\right.
$$
\par
$({\rm ii})$~If $s_2|s_1$ and $(a,b)\neq(0,0)$, then
$$
N(a,b)=\left\{
         \begin{array}{ll}
           p^{2s-(s_1+s_2)}+p^{-s_1}(p^{s_1}-1)\mathcal{G}^2, & \hbox{if $\Gamma(s_1)=0$, $c=0$;} \\
           p^{2s-(s_1+s_2)}, & \hbox{if $\Gamma(s_1)=0$, $c\neq0$;} \\
           p^{-(s_1+s_2)}(p^{2s}+(p^{s_1}-p^{s_2})\mathcal{G}^2), & \hbox{if $\Gamma(s_1)\neq0$, $c=0$;} \\
           p^{2s-(s_1+s_2)}+p^{-s_2}\mathcal{G}^2, & \hbox{if $\Gamma(s_1)\neq0$, $c\neq0$.}
         \end{array}
       \right.
$$
\par
$({\rm iii})$~If $s_1|s_2$ and $(a,b)\neq(0,0)$, then
$$
N(a,b)=\left\{
         \begin{array}{ll}
           p^{2s-(s_1+s_2)}+p^{-s_1}(p^{s_1}-1)\mathcal{G}^2, & \hbox{if $\Gamma(s_2)=0$, $c=0$;} \\
           p^{2s-(s_1+s_2)},& \hbox{\tabincell{l}{if $\Gamma(s_2)=0$, $c\neq0$; or $\Gamma(s_2)\neq0$, $c=0$, $\frac{s_2}{s_1}$\\is odd; or $\Gamma(s_2)\neq0$, $c\neq0$, $\frac{s_2}{s_1}$ is odd, $\Theta=0$;}} \\
           p^{-(s_1+s_2)}\left(p^{2s}+\eta''(-\Gamma(s_2))\eta'(\Theta)\mathcal{G}'\mathcal{G}''\mathcal{G}^2\right), & \hbox{if $\Gamma(s_2)\neq0$, $c\neq0$, $\frac{s_2}{s_1}$ is odd, $\Theta\neq0$;} \\
         p^{-(s_1+s_2)}\left(p^{2s}+\eta''(-\Gamma(s_2))(p^{s_1}-1)\mathcal{G}''\mathcal{G}^2\right), & \hbox{\tabincell{l}{if $\Gamma(s_2)\neq0$, $c=0$, $\frac{s_2}{s_1}$ is even; or\\$\Gamma(s_2)\neq0$, $c\neq0$, $\frac{s_2}{s_1}$ is even, $\Theta=0$;}}\\
         p^{-(s_1+s_2)}\left(p^{2s}-\eta''(-\Gamma(s_2))\mathcal{G}''\mathcal{G}^2\right), & \hbox{if $\Gamma(s_2)\neq0$, $c\neq0$, $\frac{s_2}{s_1}$ is even, $\Theta\neq0$.}
         \end{array}
       \right.
$$
\end{lemma}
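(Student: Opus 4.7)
The plan is to use orthogonality of additive characters to detect both trace conditions and reduce $N(a,b)$ to the auxiliary quantities $\Omega_2$ and $\Omega_3$ already evaluated in Lemmas \ref{le3.2} and \ref{le3.3}. Using the transitivity identity $\chi'_{z_1}(\mathrm{Tr}_{s_1}^{s}(u))=\chi_{z_1}(u)$ (for $z_1\in\mathbb{F}_{p^{s_1}}\subset\mathbb{F}_{p^s}$), and the analogous one for $\chi''_{z_2}$, the defining conditions become
\begin{equation*}
N(a,b) \;=\; \frac{1}{p^{s_1+s_2}}\sum_{z_1\in\mathbb{F}_{p^{s_1}}}\sum_{z_2\in\mathbb{F}_{p^{s_2}}}\chi''_{z_2}(c)\sum_{x,y\in\mathbb{F}_{p^s}}\chi\bigl(z_1(x^2+y^2)+z_2(ax+by)\bigr).
\end{equation*}

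I then split the outer double sum by whether $z_1$ and $z_2$ vanish. There are four pieces: $z_1=z_2=0$ contributes $p^{2s}$; $z_1=0$, $z_2\neq 0$ is exactly $\Omega_2$ from Lemma \ref{le3.2}; $z_1\neq 0$, $z_2=0$ evaluates via Lemma \ref{le2.3} to $\sum_{z_1\neq 0}\bigl(\eta(z_1)\mathcal{G}\bigr)^2=(p^{s_1}-1)\mathcal{G}^2$; and $z_1\neq 0$, $z_2\neq 0$ is exactly $\Omega_3$ from Lemma \ref{le3.3}. Consequently,
\begin{equation*}
N(a,b)=\frac{1}{p^{s_1+s_2}}\bigl[\,p^{2s}+\Omega_2+(p^{s_1}-1)\mathcal{G}^2+\Omega_3\,\bigr].
\end{equation*}
For part (i), $(a,b)=(0,0)$ gives $\Gamma(s_1)=\Gamma(s_2)=0$, so the first branches of Lemma \ref{le3.2} and Lemma \ref{le3.3} apply; the common factor $p^{s_2}$ cancels against $p^{s_1+s_2}$ to yield $p^{2s-s_1}+p^{-s_1}(p^{s_1}-1)\mathcal{G}^2$ when $c=0$, while the $\pm p^{2s}$ and $\pm(p^{s_1}-1)\mathcal{G}^2$ pairs cancel when $c\neq 0$. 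For parts (ii) and (iii), having $(a,b)\neq(0,0)$ forces $\Omega_2=0$ by Lemma \ref{le3.2}, so one only needs to substitute the appropriate branch of $\Omega_3$ from Lemma \ref{le3.3} and simplify.

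The main obstacle is the bookkeeping in part (iii), where Lemma \ref{le3.3} bifurcates the case $s_1\mid s_2$ into six sub-cases indexed by $(\Gamma(s_2),c)$, the parity of $s_2/s_1$, and (when both are nonzero) whether $\Theta=0$. In each sub-case one checks whether the auxiliary $(p^{s_1}-1)\mathcal{G}^2$ from piece (c) exactly cancels the $-(p^{s_1}-1)\mathcal{G}^2$ embedded in $\Omega_3$, leaving the plain value $p^{2s-(s_1+s_2)}$, or instead combines with the surviving $\eta''(-\Gamma(s_2))\mathcal{G}''$-type term to produce the stated mixed-Gauss-sum expression $p^{-(s_1+s_2)}\bigl(p^{2s}+\eta''(-\Gamma(s_2))\eta'(\Theta)\mathcal{G}'\mathcal{G}''\mathcal{G}^2\bigr)$ or $p^{-(s_1+s_2)}\bigl(p^{2s}\pm\eta''(-\Gamma(s_2))\mathcal{G}''\mathcal{G}^2\bigr)$. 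No new character sum needs to be evaluated beyond those already available in Lemmas \ref{le2.3}, \ref{le3.2}, and \ref{le3.3}; the proof is a direct case-by-case arithmetic verification using the boxed decomposition above.
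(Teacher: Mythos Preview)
Your proposal is correct and follows essentially the same approach as the paper: use character orthogonality on both trace conditions, split the $(z_1,z_2)$-sum into four pieces, and identify three of them with $\Omega_1=(p^{s_1}-1)\mathcal{G}^2$, $\Omega_2$, and $\Omega_3$ from the earlier lemmas, then substitute case by case. The only cosmetic difference is that the paper packages the $z_1\neq 0$, $z_2=0$ piece as $\Omega_1$ and cites Lemma~\ref{le3.1} for its value, whereas you compute it directly via Lemma~\ref{le2.3}; the resulting decomposition $N(a,b)=p^{2s-s_1-s_2}+p^{-(s_1+s_2)}(\Omega_1+\Omega_2+\Omega_3)$ is identical to yours.
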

\begin{proof}
By the orthogonal relation of additive characters, we have
\begin{equation*}
\begin{aligned}
N(a,b)&=\sum_{x,y\in\mathbb{F}_{p^s}}\left(\frac{1}{p^{s_1}}\sum_{z_1\in\mathbb{F}_{p^{s_1}}}\zeta_p^{z_1{\rm Tr}_{1}^{s_1}({\rm Tr}_{s_1}^s(x^2+y^2))}\right)
\left(\frac{1}{p^{s_2}}\sum_{z_2\in\mathbb{F}_{p^{s_2}}}\zeta_p^{z_2{\rm Tr}_{1}^{s_2}({\rm Tr}_{s_2}^s(ax+by)+c)}\right)\\
&=\frac{1}{p^{s_1+s_2}}\sum_{x,y\in\mathbb{F}_{p^s}}\left(1+\sum_{z_1\in\mathbb{F}_{p^{s_1}}^*}\chi_{z_1}(x^2+y^2)\right)
\left(1+\sum_{z_2\in\mathbb{F}_{p^{s_2}}^*}\chi_{z_2}(ax+by)\chi_{z_2}''(c)\right)\\
&=p^{2s-s_1-s_2}+\frac{1}{p^{s_1+s_2}}(\Omega_1+\Omega_2+\Omega_3),
\end{aligned}
\end{equation*}
where $$\Omega_1=\sum_{z_1\in\mathbb{F}_{p^{s_1}}^*}\sum_{x,y\in\mathbb{F}_{p^s}}\chi_{z_1}(x^2+y^2), ~~\Omega_2=\sum_{z_2\in\mathbb{F}_{p^{s_2}}^*}\chi_{z_2}''(c)\sum_{x,y\in\mathbb{F}_{p^s}}\chi_{z_2}(ax+by)$$ and $$\Omega_3=\sum_{z_1\in\mathbb{F}_{p^{s_1}}^*}\sum_{z_2\in\mathbb{F}_{p^{s_2}}^*}\chi_{z_2}''(c)\sum_{x\in\mathbb{F}_{p^s}}\chi(z_1x^2+z_2ax)
\sum_{y\in\mathbb{F}_{p^s}}\chi(z_1y^2+z_2by).$$
Hence, the value of $N(a,b)$ can be determined by Lemmas \ref{le3.1}, \ref{le3.2} and \ref{le3.3}.
\end{proof}

\section{The main results and their proofs}
In this section, we present the weight distribution of the linear code $\mathcal{C}_D$ defined by Eq.(\ref{eq1.4}) and prove that $\mathcal{C}_D$ is self-orthogonal under certain conditions. Furthermore, the parameters of the dual of $\mathcal{C}_D$ and a class of AMDS codes are obtained. In addition, we show that $\mathcal{C}_D$ is a locally recoverable code.

\begin{theorem}\label{th4.1}
Let $p$ be an odd prime and $s$ be a positive integer. Let $s_1$ and $s_2$ be two positive divisors of $s$. Let $\mathcal{C}_D$ be the linear code defined by Eq.(\ref{eq1.4}). Then the following statements hold.
\par
$({\rm i})$~If $s_2|s_1$, then $\mathcal{C}_D$  is a $\left[\frac{p^{2s}+(p^{s_1}-1)\mathcal{G}^2}{p^{s_1}},\frac{2s}{s_2}+1\right]_{p^{s_2}}$ linear code with weight distribution in Table~\ref{tab1}. Moreover, if $s\geq 2s_1$, then $\mathcal{C}_D$ is self-orthogonal.
\par
$({\rm ii})$~If $\frac{s_2}{s_1}$ is odd, then $\mathcal{C}_D$ is a $\left[\frac{p^{2s}+(p^{s_1}-1)\mathcal{G}^2}{p^{s_1}},\frac{2s}{s_2}+1\right]_{p^{s_2}}$ linear code with weight distribution in Table~\ref{tab2}. Moreover, if $2s>s_1+s_2$, then $\mathcal{C}_D$ is self-orthogonal.
\par
$({\rm iii})$~If $\frac{s_2}{s_1}$ is even, then $\mathcal{C}_D$ is a $\left[\frac{p^{2s}+(p^{s_1}-1)\mathcal{G}^2}{p^{s_1}},\frac{2s}{s_2}+1\right]_{p^{s_2}}$ linear code with weight distribution in Table~\ref{tab3}. Moreover, if $2s>2s_1+s_2$, then $\mathcal{C}_D$ is self-orthogonal.
\par
$({\rm iv})$~The dual $\mathcal{C}_D^\bot$ of $\mathcal{C}_D$ is a $\left[\frac{p^{2s}+(p^{s_1}-1)\mathcal{G}^2}{p^{s_1}},\frac{p^{2s}+(p^{s_1}-1)\mathcal{G}^2}{p^{s_1}}-(\frac{2s}{s_2}+1),3\right]_{p^{s_2}}$ linear code. Moreover, if $s=s_2$, then $\mathcal{C}_D^\bot$ is an AMDS code.
\end{theorem}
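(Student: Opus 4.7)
My plan is to reduce the weight distributions in parts (i)--(iii) to the formula $wt(\mathbf{c}(a,b;c)) = n - N(a,b)$, combining the length $n$ from Lemma \ref{le3.1} with the explicit expression for $N(a,b) = |\{(x,y) \in D : {\rm Tr}_{s_2}^s(ax+by) + c = 0\}|$ supplied by Lemma \ref{le3.5}. Assembling the tables then amounts to counting, in each case of Lemma \ref{le3.5}, how many triples $(a,b,c) \in \mathbb{F}_{p^s}^2 \times \mathbb{F}_{p^{s_2}}$ fall into that case. Two ingredients suffice for the counting: the identity $|D| = n$ from Lemma \ref{le3.1}, which partitions the $(a,b)$ into those with $\Gamma(s_i)=0$ versus $\Gamma(s_i) \ne 0$, and Lemma \ref{le3.4}, which for fixed $(a,b)$ with $\Gamma(s_2) \ne 0$ counts the $c$-values corresponding to each value of $\Theta$. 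The dimension $\frac{2s}{s_2}+1$ follows because the $\mathbb{F}_{p^{s_2}}$-linear map $(a,b,c) \mapsto \mathbf{c}(a,b;c)$ has trivial kernel, since no nontrivial input appears with weight $0$ in the tally.

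For self-orthogonality in parts (i)--(iii), I invoke Lemma \ref{le2.1}. The all-one vector lies in $\mathcal{C}_D$ via $(a,b,c)=(0,0,1)$, so only $p$-divisibility of each nonzero weight remains to verify. Every weight in the tables is an integer linear combination of expressions of the form $p^u(p^{s_2}-1)$, $p^v \mathcal{G}^2$, $p^w \mathcal{G}'' \mathcal{G}^2$, and $p^w \mathcal{G}' \mathcal{G}'' \mathcal{G}^2$. Using $\mathcal{G}^2 = \pm p^s$, together with $\mathcal{G}'\mathcal{G}'' = \pm p^{(s_1+s_2)/2}$ when $s_2/s_1$ is odd (so $s_1+s_2$ is even because $s_1 \equiv s_2 \pmod 2$) and $\mathcal{G}'' = \pm p^{s_2/2}$ when $s_2$ is even, the three hypotheses $s \ge 2s_1$, $2s > s_1+s_2$, and $2s > 2s_1+s_2$ are precisely what force every exponent of $p$ appearing in every weight to be at least $1$, yielding the required $p$-divisibility.

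For part (iv) I would work from the parity-check characterization: $\mathbf{v} \in \mathcal{C}_D^\bot$ iff $\sum_{(x,y) \in D} v_{(x,y)} = 0$, $\sum_{(x,y) \in D} v_{(x,y)} x = 0$, and $\sum_{(x,y) \in D} v_{(x,y)} y = 0$. The first equation rules out weight-$1$ codewords, while a putative weight-$2$ codeword with entries $\alpha, -\alpha$ at two distinct points is forced by the remaining two equations to make those points coincide, a contradiction; hence $d \ge 3$. When $s=s_2$, $\mathcal{C}_D^\bot$ has dimension $n-3$, so to reach the AMDS bound I would exhibit an explicit weight-$3$ codeword. Choose any nonzero $(x_0,y_0) \in D$, whose existence is clear from the length formula; the three points $(0,0), (x_0,y_0), (-x_0,-y_0)$ are distinct because $p$ is odd, and all lie in $D$ because the defining function ${\rm Tr}_{s_1}^s(x^2+y^2)$ is even in $(x,y)$ and vanishes at the origin. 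The vector with entries $-2, 1, 1$ at these three coordinates and $0$ elsewhere satisfies all three parity checks, and its entries are genuinely nonzero since $p$ is odd.

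The main obstacle is the bookkeeping in the first paragraph. Lemma \ref{le3.5}(iii) produces five subcases, and inside the subcase $\Gamma(s_2) \ne 0, c \ne 0, \Theta \ne 0$ one must stratify further according to the sign of $\eta''(-\Gamma(s_2))\eta'(\Theta)$ so that weights of the form $\eta''(-\Gamma(s_2))\eta'(\Theta)\mathcal{G}'\mathcal{G}''\mathcal{G}^2 + \cdots$ are routed to the correct rows of the table. Correctly applying Lemma \ref{le3.4} to enumerate $c$-values by value of $\eta'(\Theta)$, and verifying that the resulting frequency totals sum to $p^{2s+s_2}$, is the only genuinely non-routine calculation.
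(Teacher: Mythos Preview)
Your plan for parts (i)--(iii) is exactly the paper's approach: the paper also writes $wt(\mathbf{c})=n-N(a,b)$, partitions the parameter triples $(a,b,c)$ according to the case analysis of Lemma~\ref{le3.5}, counts each stratum using Lemma~\ref{le3.1} (for the $\Gamma$-condition) and Lemma~\ref{le3.4} (for the $\Theta$-condition), and concludes self-orthogonality via $p$-divisibility and Lemma~\ref{le2.1}. Your breakdown of the $p$-adic valuation of each weight term is the same verification the paper leaves implicit.

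Part (iv) is where you diverge from the paper. The paper invokes the first three Pless power moments, feeding in the weight distribution just computed to obtain $A_1^\bot=A_2^\bot=0$ and $A_3^\bot\ne0$. Your route is structural instead: you read off the parity checks from the generator, eliminate weights $1$ and $2$ directly, and then exhibit an explicit weight-$3$ dual codeword supported on $\{(0,0),(x_0,y_0),(-x_0,-y_0)\}$. Your argument is more elementary and does not rely on the weight tables at all, whereas the paper's Pless-moment computation does; on the other hand, the Pless route gives $A_3^\bot$ exactly, not just positivity. One small point: you only invoke the explicit weight-$3$ codeword in the case $s=s_2$, but the theorem asserts $d(\mathcal{C}_D^\bot)=3$ for all parameter choices, so you need $d\le3$ in general. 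Your construction requires no modification for that---the three points lie in $D$ and are distinct for every admissible $(s,s_1,s_2)$---so simply move that paragraph out of the $s=s_2$ clause.
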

\begin{table}[h!]\label{tab1}
\tiny
\setlength{\tabcolsep}{5mm}
\caption{The weight distribution of $\mathcal{C}_D$ given in Theorem \ref{th4.1} ${\rm(i)}$.}
\centering
\label{tab1}
 \begin{tabular}{cc}
\toprule
Weight&$A_{w}$\\
\midrule
$0$&$1$\\
$\frac{p^{2s}+(p^{s_1}-1)\mathcal{G}^2}{p^{s_1}}$&$p^{s_2}-1$\\
$\frac{p^{2s}(p^{s_2}-1)}{p^{s_1+s_2}}$&$\frac{p^{2s}+(p^{s_1}-1)\mathcal{G}^2}{p^{s_1}}-1$\\
$\frac{p^{2s}(p^{s_2}-1)}{p^{s_1+s_2}}+\frac{(p^{s_1}-1)\mathcal{G}^2}{p^{s_1}}$&$(p^{s_2}-1)(\frac{p^{2s}+(p^{s_1}-1)\mathcal{G}^2}{p^{s_1}}-1)$\\
$\frac{p^{2s}(p^{s_2}-1)}{p^{s_1+s_2}}+\frac{(p^{s_1+s_2}-p^{s_1})\mathcal{G}^2}{p^{s_1+s_2}}$&$\frac{(p^{s_1}-1)(p^{2s}-\mathcal{G}^2)}{p^{s_1}}$\\
$\frac{p^{2s}(p^{s_2}-1)}{p^{s_1+s_2}}+\frac{(p^{s_1+s_2}-p^{s_1}-p^{s_2})\mathcal{G}^2}{p^{s_1+s_2}}$&$\frac{(p^{s_2}-1)(p^{s_1}-1)(p^{2s}-\mathcal{G}^2)}{p^{s_1}}$\\
\bottomrule
 \end{tabular}
\end{table}
\begin{table}[h!]\label{tab2}
\tiny
\setlength{\tabcolsep}{3mm}
\caption{The weight distribution of $\mathcal{C}_D$ given in Theorem \ref{th4.1} ${\rm(ii)}$.}
\centering
\label{tab2}
 \begin{tabular}{cc}
\toprule
Weight&$A_{w}$\\
\midrule
$0$&$1$\\
$\frac{p^{2s}+(p^{s_1}-1)\mathcal{G}^2}{p^{s_1}}$&$p^{s_2}-1$\\
$\frac{p^{2s}(p^{s_2}-1)}{p^{s_1+s_2}}$&$\frac{p^{2s}+(p^{s_2}-1)\mathcal{G}^2}{p^{s_2}}-1$\\
$\frac{p^{2s}(p^{s_2}-1)}{p^{s_1+s_2}}+\frac{(p^{s_1}-1)\mathcal{G}^2}{p^{s_1}}$&
$\frac{p^{s_2}(p^{2s}-1)}{p^{s_1}}+(\frac{p^{s_1+s_2}-p^{s_2}}{p^{s_1}}-1)(\frac{p^{2s}+(p^{s_2}-1)\mathcal{G}^2}{p^{s_2}}-1)$\\
$\frac{p^{2s}+(p^{s_1}-1)\mathcal{G}^2}{p^{s_1}}-\frac{p^{2s}+\eta''(-1)\mathcal{G}'\mathcal{G}''\mathcal{G}^2}{p^{s_1+s_2}}$&
$\frac{p^{s_1}-1}{2}(p^{2s}-\frac{p^{2s}+(p^{s_2}-1)\mathcal{G}^2}{p^{s_2}})(p^{s_2-s_1}+\frac{\eta'(-1)\mathcal{G}'\mathcal{G}''}{p^{s_1}})$\\
$\frac{p^{2s}+(p^{s_1}-1)\mathcal{G}^2}{p^{s_1}}-\frac{p^{2s}-\eta''(-1)\mathcal{G}'\mathcal{G}''\mathcal{G}^2}{p^{s_1+s_2}}$&
$\frac{p^{s_1}-1}{2}(p^{2s}-\frac{p^{2s}+(p^{s_2}-1)\mathcal{G}^2}{p^{s_2}})(p^{s_2-s_1}-\frac{\eta'(-1)\mathcal{G}'\mathcal{G}''}{p^{s_1}})$\\
\bottomrule
 \end{tabular}
\end{table}
\begin{table}[h!]\label{tab3}
\tiny
\setlength{\tabcolsep}{3mm}
\caption{The weight distribution of $\mathcal{C}_D$ given in Theorem \ref{th4.1} ${\rm(iii)}$.}
\centering
\label{tab3}
 \begin{tabular}{cc}
\toprule
Weight&$A_{w}$\\
\midrule
$0$&$1$\\
$\frac{p^{2s}+(p^{s_1}-1)\mathcal{G}^2}{p^{s_1}}$&$p^{s_2}-1$\\
$\frac{p^{2s}(p^{s_2}-1)}{p^{s_1+s_2}}$&$\frac{p^{2s}+(p^{s_2}-1)\mathcal{G}^2}{p^{s_2}}-1$\\
$\frac{p^{2s}(p^{s_2}-1)}{p^{s_1+s_2}}+\frac{(p^{s_1}-1)\mathcal{G}^2}{p^{s_1}}$&$(p^{s_2}-1)\left(\frac{p^{2s}+(p^{s_2}-1)\mathcal{G}^2}{p^{s_2}}-1\right)$\\
$\frac{p^{2s}+(p^{s_1}-1)\mathcal{G}^2}{p^{s_1}}-\frac{p^{2s}+\eta''(-1)(p^{s_1}-1)\mathcal{G}''\mathcal{G}^2}{p^{s_1+s_2}}$&
$\frac{(p^{s_2}-1)(p^{2s}-\mathcal{G}^2)(p^{s_2}+(p^{s_1}-1)\mathcal{G}'')}{2p^{s_1+s_2}}$\\
$\frac{p^{2s}+(p^{s_1}-1)\mathcal{G}^2}{p^{s_1}}-\frac{p^{2s}-\eta''(-1)(p^{s_1}-1)\mathcal{G}''\mathcal{G}^2}{p^{s_1+s_2}}$&
$\frac{(p^{s_2}-1)(p^{2s}-\mathcal{G}^2)(p^{s_2}-(p^{s_1}-1)\mathcal{G}'')}{2p^{s_1+s_2}}$\\
$\frac{p^{2s}+(p^{s_1}-1)\mathcal{G}^2}{p^{s_1}}-\frac{p^{2s}-\eta''(-1)\mathcal{G}''\mathcal{G}^2}{p^{s_1+s_2}}$&
$\frac{(p^{s_1}-1)(p^{s_2}-1)(p^{2s}-\mathcal{G}^2)(p^{s_2}-\mathcal{G}'')}{2p^{s_1+s_2}}$\\
$\frac{p^{2s}+(p^{s_1}-1)\mathcal{G}^2}{p^{s_1}}-\frac{p^{2s}+\eta''(-1)\mathcal{G}''\mathcal{G}^2}{p^{s_1+s_2}}$&
$\frac{(p^{s_1}-1)(p^{s_2}-1)(p^{2s}-\mathcal{G}^2)(p^{s_2}+\mathcal{G}'')}{2p^{s_1+s_2}}$\\
\bottomrule
 \end{tabular}
\end{table}

\begin{proof}
For the first three results, we only prove the result of (ii), and the proofs of other cases are similar. By Eq.(\ref{eq1.4}), Lemmas \ref{le3.1} and \ref{le3.5}, the Hamming weight of any codeword $\mathbf{c}$ in $\mathcal{C}_D$ is as follows:
$$wt(\mathbf{c})=n-N(a,b)=\frac{p^{2s}+(p^{s_1}-1)\mathcal{G}^2}{p^{s_1}}-N(a,b),$$
where $N(a,b)$ is given in Lemma \ref{le3.5}. By Lemma \ref{le3.5}, we have the following discussions.
\par
If $(a,b)=(0,0)$ and $c=0$, then $wt(\mathbf{c})=n-(p^{2s-s_1}+\frac{p^{s_1}-1}{p^{s_1}}\mathcal{G}^2)=0$.
\par
If $a,b$ and $c$ are in the set
$$
S_1=\{a,b\in\mathbb{F}_{p^s},c\in\mathbb{F}_{p^{s_2}}:(a,b)=(0,0)~{\rm and}~c\in\mathbb{F}_{p^{s_2}}^*\},
$$
then the Hamming weights of codewords given by Eq.(\ref{eq1.4}) are $w_1=p^{2s-s_1}+p^{-s_1}(p^{s-1}-1)\mathcal{G}^2$, and $A_{w_1}=|S_1|=p^{s_2}-1$.
\par
If $a,b$ and $c$ are in the set
$$
S_2=\{a,b\in\mathbb{F}_{p^s},c\in\mathbb{F}_{p^{s_2}}:(a,b)\neq(0,0),\,\Gamma(s_2)=0~{\rm and}~c=0\},
$$
then the Hamming weights of codewords given by Eq.(\ref{eq1.4}) are
$w_2=p^{2s-s_1}+p^{-s_1}(p^{s_1}-1)\mathcal{G}^2-(p^{2s-s_1-s_2}+p^{-s_1}(p^{s-1}-1)\mathcal{G}^2)=p^{2s-s_1-s_2}(p^{s_2}-1)$, and $A_{w_2}=|S_2|$. By Lemma \ref{le3.1},
\begin{equation*}
\begin{aligned}
A_{w_2}=|\{(a,b)\in(\mathbb{F}_{p^s}^*,\mathbb{F}_{p^s}^*):{\rm Tr}_{s_2}^s(a^2+b^2)=0\}|=p^{2s-s_2}+p^{-s_2}(p^{s_2}-1)\mathcal{G}^2-1.
\end{aligned}
\end{equation*}
\par
If $a,b$ and $c$ are in the set
$$
S_3=\{a,b\in\mathbb{F}_{p^s},c\in\mathbb{F}_{p^{s_2}}: (a,b)\neq(0,0), \Gamma(s_2)=0~{\rm and}~c\in\mathbb{F}_{p^{s_2}}^*\},
$$
then the Hamming weights of codewords given by Eq.(\ref{eq1.4}) are
$w_3=p^{2s-s_1}+p^{-s_1}(p^{s_1}-1)\mathcal{G}^2-p^{2s-s_1-s_2}=p^{2s-s_1-s_2}(p^{s_2}-1)+p^{-s_1}(p^{s_1}-1)\mathcal{G}^2$, and $A_{w_3}=|S_3|$. Clearly,
\begin{equation*}
\begin{aligned}
A_{w_3}=(p^{s_2}-1)A_{w_2}=(p^{s_2}-1)\left(p^{2s-s_2}+p^{-s_2}(p^{s_2}-1)\mathcal{G}^2-1\right).
\end{aligned}
\end{equation*}
\par
If $a,b$ and $c$ are in the set
$$
S_4=\{a,b\in\mathbb{F}_{p^s},c\in\mathbb{F}_{p^{s_2}}: (a,b)\neq(0,0), \Gamma(s_2)\neq0~{\rm and}~c=0\},
$$
then the Hamming weights of codewords given by Eq.(\ref{eq1.4}) are $w_4=w_3$, and $A_{w_4}=|S_4|$. It is obvious that
$$A_{w_4}=p^{2s}-1-|S_2|=p^{2s-s_2}(p^{s_2}-1)-p^{-s_2}(p^{s_2}-1)\mathcal{G}^2.$$
\par
If $a,b$ and $c$ are in the set
$$
S_5=\{a,b\in\mathbb{F}_{p^s},c\in\mathbb{F}_{p^{s_2}}: (a,b)\neq(0,0), \Gamma(s_2)\neq0, c\in\mathbb{F}_{p^{s_2}}^*~{\rm and}~\Theta=0\},
$$
then the Hamming weights of codewords given by Eq.(\ref{eq1.4}) are $w_5=w_3$, and $A_{w_5}=|S_5|$. By Lemma~\ref{le3.4},
\begin{equation*}
\begin{aligned}
A_{w_5}=N(c,0)\cdot|S_4|&=(p^{s_2-s_1}-1)(p^{2s}-1-|S_2|)\\
&=(p^{s_2-s_1}-1)\left(p^{2s-s_2}(p^{s_2}-1)-p^{-s_2}(p^{s_2}-1)\mathcal{G}^2\right).
\end{aligned}
\end{equation*}

If $a,b$ and $c$ are in the set
$$
S_6=\{a,b\in\mathbb{F}_{p^s},c\in\mathbb{F}_{p^{s_2}}: (a,b)\neq(0,0), \Gamma(s_2)\neq0, c\in\mathbb{F}_{p^{s_2}}^*, \Theta\neq0~{\rm and}~\eta''(\Gamma(s_2))\eta'(\Theta)=1\},
$$
then the Hamming weights of codewords given by Eq.(\ref{eq1.4}) are $w_6=p^{2s-s_1}+p^{-s_1}(p^{s_1}-1)\mathcal{G}^2-p^{-(s_1+s_2)}(p^{2s}+\eta''(-1)\mathcal{G}'\mathcal{G}''\mathcal{G}^2)$.  Clearly, $\eta''(\Gamma(s_2))=\eta'(\Theta)=1$, or $\eta''(\Gamma(s_2))=\eta'(\Theta)=-1$ if $\Theta\neq0~{\rm and}~\eta''(\Gamma(s_2))\eta'(\Theta)=1$.  By Lemma \ref{le3.4},
\begin{equation*}
\begin{aligned}
A_{w_6}=&\frac{p^{s_1}-1}{2}\cdot N(c,\rho_1)\cdot|\{(a,b)\in(\mathbb{F}_{p^s}^*,\mathbb{F}_{p^s}^*):\eta''(\Gamma(s_2))=1\}|\\
       &+\frac{p^{s_1}-1}{2}\cdot N(c,\rho_2)\cdot|\{(a,b)\in(\mathbb{F}_{p^s}^*,\mathbb{F}_{p^s}^*):\eta''(\Gamma(s_2))=-1\}|\\
=&\frac{p^{s_1}-1}{2}(p^{s_2-s_1}+\eta'(-1)\mathcal{G}'\mathcal{G}''p^{-s_1})\left(p^{2s-s_2}(p^{s_2}-1)-p^{-s_2}(p^{s_2}-1)\mathcal{G}^2\right),
\end{aligned}
\end{equation*}
where $\rho_1$ and $\rho_2$ are the square and nonsquare elements of $\mathbb{F}_{p^{s_1}}$, respectively.
\par
If $a,b$ and $c$ are in the set
$$
S_7=\{a,b\in\mathbb{F}_{p^s},c\in\mathbb{F}_{p^{s_2}}: (a,b)\neq(0,0), \Gamma(s_2)\neq0, c\in\mathbb{F}_{p^{s_2}}^*, \Theta\neq0~{\rm and}~\eta''(\Gamma(s_2))\eta'(\Theta)=-1\},
$$
then the Hamming weights of codewords given by Eq.(\ref{eq1.4}) are $w_7=p^{2s-s_1}+p^{-s_1}(p^{s_1}-1)\mathcal{G}^2-p^{-(s_1+s_2)}(p^{2s}-\eta''(-1)\mathcal{G}'\mathcal{G}''\mathcal{G}^2)$.  It is obvious that $\eta''(\Gamma(s_2))=1$ and $\eta'(\Theta)=-1$, or $\eta''(\Gamma(s_2))=-1$ and $\eta'(\Theta)=1$ if $\eta''(\Gamma(s_2))\eta'(\Theta)=-1$.  By Lemma \ref{le3.4},
\begin{equation*}
\begin{aligned}
A_{w_7}=&\frac{p^{s_1}-1}{2}\cdot N(c,\rho_1)\cdot|\{(a,b)\in(\mathbb{F}_{p^s}^*,\mathbb{F}_{p^s}^*):\eta''(\Gamma(s_2))=-1\}|\\
       &+\frac{p^{s_1}-1}{2}\cdot N(c,\rho_2)\cdot|\{(a,b)\in(\mathbb{F}_{p^s}^*,\mathbb{F}_{p^s}^*):\eta''(\Gamma(s_2))=1\}|\\
=&\frac{p^{s_1}-1}{2}(p^{s_2-s_1}-\eta'(-1)\mathcal{G}'\mathcal{G}''p^{-s_1})\left(p^{2s-s_2}(p^{s_2}-1)-p^{-s_2}(p^{s_2}-1)\mathcal{G}^2\right),
\end{aligned}
\end{equation*}
where $\rho_1$ and $\rho_2$ are the square and nonsquare elements of $\mathbb{F}_{p^{s_1}}$, respectively.
\par
Hence, the weight distribution of $\mathcal{C}_D$ can be followed  from above discussions when $\frac{s_2}{s_1}$ is odd. Moreover, if $2s>s_1+s_2$, then $\mathcal{C}_D$ is $p$-divisible. Consequently, $\C_D$
is self-orthogonal by Lemma~\ref{le2.1}.

Now, we prove the result of (iv). It is easy to show that the length of $\mathcal{C}_D^\bot$ is equal to $\frac{p^{2s}+(p^{s_1}-1)\mathcal{G}^2}{p^{s_1}}$, and the dimension of $\mathcal{C}_D^\bot$ is equal to $\frac{p^{2s}+(p^{s_1}-1)\mathcal{G}^2}{p^{s_1}}-(\frac{2s}{s_2}+1)$. From the first three Pless power moment, we have
 $$A_1^\bot=A_2^\bot=0\,\, \text{and}\,\,A_3^\bot\neq 0.$$
 Hence, $d(\mathcal{C}_D^\bot)=3$. If $s=s_2$, then ${\rm dim}(\mathcal{C}_D)=3$ and ${\rm dim}(\mathcal{C}_D^\bot)=n-3$. Therefore, $\mathcal{C}_D^\bot$ is an $[n,n-3,3]_{p^{s_2}}$ AMDS code when $s=s_2$, which is optimal since the length of $\mathcal{C}_D^\bot$ is larger than $n>p^{s_2}+1$.
\end{proof}
\par
Below, we present several examples that support the results given in Theorem~\ref{th4.1}, which have been verified using Magma programs.
\begin{example}\label{ex4.3}
(1)~Let $(p,s,s_1,s_2)=(3,2,1,1)$. By Theorem \ref{th4.1} ${\rm (i)}$, $\mathcal{C}_D$ is a $[33,5,18]_3$ linear code with weight enumerator
$1+2x^{33}+112x^{24}+32x^{18}+96x^{21}$. Clearly, $\mathcal{C}_D$ is $3$-divisible and self-orthogonal by Lemma~\ref{le2.1}.
\par
(2)~Let $(p,s,s_1,s_2)=(3,3,1,1)$. By Theorem \ref{th4.1} ${\rm (ii)}$, $\mathcal{C}_D$ is a $[225,7,144]_3$ linear code with weight enumerator
$1+952x^{144}+2x^{225}+224x^{162}+1008x^{153}$. Clearly, $\mathcal{C}_D$ is $3$-divisible and self-orthogonal by Lemma~\ref{le2.1}.
\par
(3)~Let $(p,s,s_1,s_2)=(3,4,1,2)$. By Theorem \ref{th4.1} ${\rm (iii)}$, $\mathcal{C}_D$ is a $[2241,5,1944]_9$ linear code with weight enumerator
$1+14400x^{1980}+2880x^{2016}+11520x^{2007}+6400x^{1998}+800x^{1944}+8x^{2241}+23040x^{1989}$. Clearly, $\mathcal{C}_D$ is $3$-divisible and self-orthogonal by Lemma~\ref{le2.1}.
\par
(4)~Let $(p,s,s_1,s_2)=(3,2,1,2)$. By Theorem \ref{th4.1} ${\rm (iii)}$ and ${\rm (iv)}$, $\mathcal{C}_D$ is a $[33,3,24]_9$ linear code with weight enumerator
$1+8x^{33}+16x^{24}+160x^{28}+256x^{29}+128x^{30}+128x^{31}+32x^{32}$.  Besides, its dual $\mathcal{C}_D^\bot$ is a $[33,30,3]_9$ AMDS code.
\end{example}
\begin{remark}\label{re4.3}
Let $\mathbb{F}_{p^s}^*=\langle\omega\rangle$ and $\{1,\omega,\ldots,\omega^{\frac{s}{s_2}-1}\}$ be the basis of $\mathbb{F}_{p^s}$ over $\mathbb{F}_{p^{s_2}}$.
Since the dimension of the linear code $\mathcal{C}_D$ defined by Eq.(\ref{eq1.4}) is $\frac{2s}{s_2}+1$,
by the definition, the generator matrix of  $\mathcal{C}_D$ can be represented as follows:
\begin{equation}\label{matrix:{G_D}}
\begin{aligned}
G_{D}=
\left[
  \begin{array}{cccc}
    1&1&\cdots &1\\
    {\rm Tr}_{s_2}^s(x_1\omega^0) & {\rm Tr}_{s_2}^s(x_2\omega^0) & \cdots & {\rm Tr}_{s_2}^s(x_n\omega^0) \\
    \vdots&\vdots&\ddots&\vdots\\
    {\rm Tr}_{s_2}^s(x_1\omega^{\frac{s}{s_2}-1}) & {\rm Tr}_{s_2}^s(x_2\omega^{\frac{s}{s_2}-1}) & \cdots & {\rm Tr}_{s_2}^s(x_n\omega^{\frac{s}{s_2}-1}) \\
    {\rm Tr}_{s_2}^s(y_1\omega^0) & {\rm Tr}_{p^s/p^{s_2}}(y_2\omega^0) & \cdots & {\rm Tr}_{s_2}^s(y_n\omega^0) \\
    \vdots&\vdots&\ddots&\vdots\\
    {\rm Tr}_{s_2}^s(y_1\omega^{\frac{s}{s_2}-1}) & {\rm Tr}_{s_2}^s(y_2\omega^{\frac{s}{s_2}-1}) & \cdots & {\rm Tr}_{s_2}^s(y_n\omega^{\frac{s}{s_2}-1}) \\
  \end{array}
\right],
\end{aligned}
\end{equation}
where $D=\{(x_1,y_1),(x_2,y_2)\ldots,(x_n,y_n)=(0,0)\}$. Let $G_D=[\mathbf{g}_1,\mathbf{g}_2,\ldots,\mathbf{g}_n]$, where $\mathbf{g}_i$ denotes the $i$-th column vector of $G_D$ for $1\leq i\leq n$. Using the method given in \cite[Theorem 3.2]{Heng2024}, for any column vector $\mathbf{g}_i$ of $G_D$, there exists a column vector $\mathbf{g}_j$ of $G_D$ such that $\mathbf{g}_n$ can be linearly represented by $\mathbf{g}_i$ and $\mathbf{g}_j$ over $\mathbb{F}_{p^{s_2}}$, where $1\leq i,j\leq n-1,i\neq j$. From \cite{Huang2016}, we know that the linear code $\mathcal{C}_D$ defined by Eq.(\ref{eq1.4}) is a locally recoverable code with parameters
$$[n,k,d;r]_q=\left[\frac{p^{2s}+(p^{s_1}-1)\mathcal{G}^2}{p^{s_1}},\frac{2s}{s_2}+1,d(\mathcal{C}_D);2\right]_{p^{s_2}}.$$
Next, we provide an example of locally recoverable code as follows. Let $(p,s,s_1,s_2)=(3,2,2,1)$ and $\omega$ be the primitive element of $\mathbb{F}_{3^2}$. Then $\mathcal{C}_{D}$ defined by Eq.(\ref{eq1.4}) is a $[17,5,6]_3$ linear code with defining set
\begin{align*}
\setlength{\arraycolsep}{1.2pt}
\begin{array}{lc}
D=
     \begin{Bmatrix}
\begin{array}{ccccccccc}
  (1,\omega^2),&(1,\omega^6),&(\omega,\omega^3),&(\omega,\omega^7),&(\omega^2,1),&(\omega^2,2),&(\omega^3,\omega),&(\omega^3,\omega^5),&(2,\omega^2),\\
  (2,\omega^6),&(\omega^5,\omega^3),&(\omega^5,\omega^7),&(\omega^6,1),&(\omega^6,2),&(\omega^7,\omega),&(\omega^7,\omega^5),&(0,0)
\end{array}
\end{Bmatrix}.
 \end{array}
\end{align*}
By Magma programs, its generator matrix is
\begin{equation*}
\begin{aligned}
G_{D}=
\left[
  \begin{array}{ccccccccccccccccc}
    1 & 1 & 1 & 1 & 1 & 1 & 1 & 1 & 1 & 1 & 1 & 1 & 1 & 1 & 1 & 1 & 1 \\
    2 & 2 & 1 & 1 & 0 & 0 & 1 & 1 & 1 & 1 & 2 & 2 & 0 & 0 & 2 & 2 & 0 \\
    1 & 1 & 0 & 0 & 1 & 1 & 1 & 1 & 2 & 2 & 0 & 0 & 2 & 2 & 2 & 2 & 0 \\
    0 & 0 & 1 & 2 & 2 & 1 & 1 & 2 & 0 & 0 & 1 & 2 & 2 & 1 & 1 & 2 & 0 \\
    1 & 2 & 1 & 2 & 1 & 2 & 0 & 0 & 1 & 2 & 1 & 2 & 1 & 2 & 0 & 0 & 0 \\
  \end{array}
\right].
\end{aligned}
\end{equation*}
Let $\mathbf{g}_i$ denote the $i$-th column vector of $G_D$ for $1\leq i\leq 17$. Then
$$\mathbf{g}_{17}=2\mathbf{g}_1+2\mathbf{g}_{10},\mathbf{g}_{17}=2\mathbf{g}_2+2\mathbf{g}_{9},\mathbf{g}_{17}=2\mathbf{g}_3+2\mathbf{g}_{12},
\mathbf{g}_{17}=2\mathbf{g}_4+2\mathbf{g}_{11},$$
$$\mathbf{g}_{17}=2\mathbf{g}_5+2\mathbf{g}_{14},\mathbf{g}_{17}=2\mathbf{g}_6+2\mathbf{g}_{13},\mathbf{g}_{17}=2\mathbf{g}_7+2\mathbf{g}_{16},
\mathbf{g}_{17}=2\mathbf{g}_8+2\mathbf{g}_{15}.$$
Hence, $\mathcal{C}_{D}$ is a locally recoverable code with parameters $[17,5,6;2]_3$.
\end{remark}
\par
At the end of this section, we list some optimal and AMDS linear codes constructed from Theorem \ref{th4.1} in Table \ref{tab4}. Notably,  we say that our codes are optimal as one can find the best known linear codes with the same parameters in the Code Tables at \cite{Grassl}.
\begin{table}[h!]
\setlength{\tabcolsep}{5mm}
\caption{Some optimal and AMDS linear codes from Theorem \ref{th4.1}.}
\centering
\label{tab4}
 \begin{tabular}{cccc}
\toprule
$(p,s,s_1,s_2)$&Code&$[n,k,d]_{p^{s_2}}$&Optimality\\
\midrule
$(3,2,1,1)$&$\mathcal{C}_D^\bot$&$[33,28,3]_{3}$&Optimal\\
$(3,2,1,2)$&$\mathcal{C}_D^\bot$&$[33,30,3]_{3^2}$&AMDS\\
$(3,2,2,2)$&$\mathcal{C}_D^\bot$&$[17,14,3]_{3^2}$&AMDS\\
$(3,3,1,1)$&$\mathcal{C}_D$&$[225,7,144]_{3}$&Optimal\\
$(3,3,1,1)$&$\mathcal{C}_D^\bot$&$[225,218,3]_{3}$&Optimal\\
$(3,3,1,3)$&$\mathcal{C}_D^\bot$&$[225,222,3]_{3^3}$&AMDS\\
$(5,1,1,1)$&$\mathcal{C}_D^\bot$&$[9,6,3]_{5}$&AMDS\\
$(5,2,1,2)$&$\mathcal{C}_D^\bot$&$[145,142,3]_{5^2}$&AMDS\\
$(5,2,2,2)$&$\mathcal{C}_D^\bot$&$[49,46,3]_{5^2}$&AMDS\\
$(13,1,1,1)$&$\mathcal{C}_D^\bot$&$[25,22,3]_{13}$&AMDS\\
\bottomrule
 \end{tabular}
\end{table}
\section{Applications in quantum codes and LCD codes}
In this section, we obtain a new class of quantum codes, which are MDS or AMDS according to the quantum Singleton bound under certain conditions. Furthermore, we obtain a new class of LCD codes, which are almost optimal according to the sphere packing bound.

\subsection{New quantum codes from self-orthogonal codes}
Let $\mathbb{C}$ be a field of complex numbers and $\mathbb{C}^q$ denote the $q$-dimensional complex vector space. Denote $(\mathbb{C}^q)^{\otimes n}=\mathbb{C}^{q^n}=\mathbb{C}^q\otimes\mathbb{C}^q\otimes\cdots\otimes\mathbb{C}^q$ as the $n$-fold tensor product of $\mathbb{C}^q$, which forms the $q^n$-dimensional complex vector space. An $((n,K,d))_q$ quantum code, denoted as $\mathcal{Q}$, is a $K$-dimensional subspace of $\mathbb{C}^{q^n}$, where $n$ is the length, $K=\dim_\mathbb{C}\mathcal{Q}$ is the dimension and $d$ is the minimum distance of $\mathcal{Q}$, respectively. Generally, the parameters of $\mathcal{Q}$ can also be expressed as $[[n,k,d]]_q$, where $k=\log_qK$.

\begin{definition}\cite[Definition 2.1.8, Page 47]{Feng2010}
An $[[n,k,d]]_q$ quantum code $\mathcal{Q}$ is said to be pure if $\langle v|e|v'\rangle=0$ for any two codewords $|v\rangle$ and $|v'\rangle$ of $\mathcal{Q}$ and any error $e\in E_n$ such that $1\leq w_\mathcal{Q}(e)\leq d-1$, where $\langle v|e|v'\rangle$ denotes the Hermitian inner product of $e|v\rangle$ and $|v'\rangle$, $E_n$ denotes the error group of $\mathcal{Q}$ and $w_\mathcal{Q}(e)$ denotes the quantum weight of $e$.
\end{definition}
By using the well-known CSS construction \cite{Calderbank1997,Steane1996} and Steane construction \cite{Steane1999}, one can construct binary quantum codes from binary self-orthogonal codes. In the following lemma, the Steane construction was generalized to the general $q$-ary case, where $q$ is a prime power.

\begin{lemma}\cite{Ling2010}\label{le4.15}
Let $\mathcal{C}_1$ and $\mathcal{C}_2$ be $[n,k_1,d_1]_q$ and $[n,k_2,d_2]_q$ linear codes, respectively. If $\mathcal{C}_1^\bot\subseteq\mathcal{C}_1\subseteq\mathcal{C}_2$ and $k_1+2\leq k_2$, then there exists an
$$\left[\left[n,k_1+k_2-n,\min\left\{d_1,\left\lceil\frac{q+1}{q}d_2\right\rceil\right\}\right]\right]_q$$
pure quantum code, where $\left\lceil\frac{q+1}{q}d_2\right\rceil$ denotes the smallest integer greater than or equal to $\frac{q+1}{q}d_2$.
\end{lemma}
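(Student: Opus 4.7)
The plan is to build a pure stabilizer code from the classical nested pair $(\mathcal{C}_1,\mathcal{C}_2)$, adapting Steane's enlargement trick to the $q$-ary setting.

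First, I would establish a CSS-type backbone. Dualizing the inclusion $\mathcal{C}_1\subseteq\mathcal{C}_2$ gives $\mathcal{C}_2^\bot\subseteq\mathcal{C}_1^\bot$, and combining this with $\mathcal{C}_1^\bot\subseteq\mathcal{C}_1$ yields $\mathcal{C}_2^\bot\subseteq\mathcal{C}_1$. The standard $q$-ary CSS construction applied to the pair $(\mathcal{C}_2,\mathcal{C}_1)$ then produces a pure stabilizer code of length $n$ and dimension $k_1+k_2-n$, whose logical $X$- and $Z$-operators are indexed by cosets in $\mathcal{C}_1/\mathcal{C}_2^\bot$ and $\mathcal{C}_2/\mathcal{C}_1^\bot$ respectively, with naive minimum distance $\min\{d_1,d_2\}$ attained as the smaller of the two coset minimum weights.

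Second, I would sharpen the distance estimate via a Steane-type twist. The $d_1$ branch is automatic: any logical representative on the $\mathcal{C}_1\setminus\mathcal{C}_2^\bot$ side lies in $\mathcal{C}_1$ and therefore has Hamming weight $\ge d_1$. The substantive improvement is on the other side, where the naive $d_2$ bound is lifted to $\lceil(q+1)d_2/q\rceil$ by enlarging the stabilizer with $q-1$ extra symplectic generators that couple $X$- and $Z$-supports through the cosets of $\mathcal{C}_1$ in $\mathcal{C}_2$. After the twist, any logical representative on this side of symplectic weight $w$ is forced into a family of $q$ related classical vectors of $\mathcal{C}_2$, each of Hamming weight at least $d_2$; a counting argument across these $q$ constraints produces the inequality $qw\ge(q+1)d_2$, hence $w\ge\lceil(q+1)d_2/q\rceil$.

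Third, I would check purity and collect parameters. The stabilizer elements lift to pairs in $\mathcal{C}_2^\bot\oplus\mathcal{C}_1^\bot$, and since $\mathcal{C}_1^\bot\subseteq\mathcal{C}_1$ they have Hamming weight at least $d(\mathcal{C}_1^\bot)\ge d_1$, so the minimum distance is attained only by logical operators; together with Step 2 this gives a pure $[[n,\,k_1+k_2-n,\,\min\{d_1,\lceil(q+1)d_2/q\rceil\}]]_q$ quantum code.

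I expect Step 2 to be the main obstacle, specifically proving the sharp $(q+1)/q$ factor. Steane's original binary argument needs only a single twist and a direct parity count; the $q$-ary case requires $q-1$ simultaneous twists and a more delicate combinatorial (or character-sum) argument tracking how the scalar orbit $\{\alpha c:\alpha\in\mathbb{F}_q^*\}$ interacts with the coset structure $\mathcal{C}_2/\mathcal{C}_1$. The dimension hypothesis $k_1+2\le k_2$ enters precisely at this step, guaranteeing enough independent cosets in $\mathcal{C}_2/\mathcal{C}_1$ to support the $q-1$ twist generators that drive the improvement from $d_2$ to $\lceil(q+1)d_2/q\rceil$.
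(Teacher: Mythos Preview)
The paper does not prove this lemma at all: it is quoted verbatim from Ling--Luo--Xing \cite{Ling2010} and used as a black box in the proof of Theorem~\ref{th4.17}. There is therefore no ``paper's own proof'' to compare your proposal against.

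That said, your outline is broadly the right shape for the Ling--Luo--Xing argument (CSS backbone plus a $q$-ary Steane enlargement), but two places are imprecise. First, the hypothesis $k_1+2\le k_2$ gives $\dim(\mathcal{C}_2/\mathcal{C}_1)\ge 2$, i.e.\ at least two independent coset representatives, not ``$q-1$ twist generators''; the actual enlargement uses a single injective $\mathbb{F}_q$-linear map on a complement of $\mathcal{C}_1$ in $\mathcal{C}_2$ with no eigenvalue in $\mathbb{F}_q$, and the existence of such a map is exactly what $k_2-k_1\ge 2$ guarantees. Second, your purity check in Step~3 is not quite right: nonzero stabilizer elements live in (the symplectic embedding of) $\mathcal{C}_1^\bot$, and $d(\mathcal{C}_1^\bot)$ need not be $\ge d_1$ in general. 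Purity in the cited result comes instead from the fact that the claimed distance is computed over the full centralizer, not just over logical operators, so the bound $\min\{d_1,\lceil(q+1)d_2/q\rceil\}$ already accounts for stabilizer weights.
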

\par
Similar to classical linear codes, there are trade-offs among the parameters of quantum codes as well. The well-known quantum Singleton bound on pure quantum codes is given as follows.
\begin{lemma}\cite[Quantum Singleton bound]{Feng2010}\label{le4.16}
For any $[[n,k,d]]_q$ quantum code, it parameters satisfy $2(d-1)\leq n-k$.
\end{lemma}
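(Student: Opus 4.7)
The plan is to establish the quantum Singleton bound by the standard purification-plus-entropy argument, treating the pure case (which is exactly the setting produced by Lemma~\ref{le4.15} and suffices for the paper's subsequent applications). Let $\mathcal{Q}\subseteq(\mathbb{C}^q)^{\otimes n}$ be a pure $[[n,k,d]]_q$ quantum code with orthonormal codeword basis $\{|v_1\rangle,\ldots,|v_K\rangle\}$, $K=q^k$. Introduce a reference system $R\cong\mathbb{C}^{q^k}$ and form the maximally entangled pure state
\begin{equation*}
|\Psi\rangle_{RQ}=\frac{1}{\sqrt{K}}\sum_{i=1}^{K}|i\rangle_R\otimes|v_i\rangle_Q.
\end{equation*}
Fix any subset $A\subseteq\{1,\ldots,n\}$ with $|A|=d-1$ and let $B=\{1,\ldots,n\}\setminus A$, so $|B|=n-d+1$.

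The algebraic core is a decoupling claim: the reduced state $\rho_{RA}=\mathrm{Tr}_B(|\Psi\rangle\langle\Psi|)$ factorises as $\rho_R\otimes\rho_A$ with both factors maximally mixed. To prove it, take any operator $F$ on $R$ and any Pauli-type basis operator $e_A$ supported on $A$ (which automatically has quantum weight at most $|A|=d-1$), and compute
\begin{equation*}
\mathrm{Tr}\!\bigl(\rho_{RA}(F\otimes e_A)\bigr)=\frac{1}{K}\sum_{i,j}F_{ij}\,\langle v_i|(e_A\otimes I_B)|v_j\rangle.
\end{equation*}
By the purity hypothesis, $\langle v_i|(e_A\otimes I_B)|v_j\rangle=0$ for every non-identity $e_A$, so when $\rho_{RA}$ is expanded against the complete basis $\{F\otimes e_A\}$ the only surviving component is that proportional to $F\otimes I_A$. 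Normalisation pins this component down as $\rho_{RA}=(I_R/q^k)\otimes(I_A/q^{d-1})$.

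With decoupling in hand the remainder is entropy bookkeeping. Additivity of the von Neumann entropy on a product state gives $S(\rho_{RA})=S(\rho_R)+S(\rho_A)=(k+d-1)\log q$. Purity of $|\Psi\rangle_{RQ}$ combined with the Schmidt decomposition across the cut $RA\,|\,B$ yields $S(\rho_{RA})=S(\rho_B)$, and the universal dimension bound gives $S(\rho_B)\le|B|\log q=(n-d+1)\log q$. Chaining these,
\begin{equation*}
(k+d-1)\log q=S(\rho_{RA})=S(\rho_B)\le(n-d+1)\log q,
\end{equation*}
and cancelling $\log q$ rearranges to the quantum Singleton bound $2(d-1)\le n-k$.

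The main obstacle is the decoupling step. One has to translate the combinatorial purity condition, phrased only in terms of individual matrix entries $\langle v|e|v'\rangle$ for errors $e\in E_n$, into the vanishing of trace inner products against arbitrary product operators $F\otimes e_A$; this hinges on the completeness of a Pauli-type operator basis on the $(d-1)$-qudit system $A$ and on being allowed to choose $F$ freely. Everything else is essentially a one-line calculation using only additivity on product states, the Schmidt equality $S(\rho_{RA})=S(\rho_B)$, and the universal dimension bound. For genuinely impure codes the factor $\rho_A=I_A/q^{d-1}$ weakens to some fixed but possibly rank-deficient $\sigma_A$, and the naive entropy chain then gives only the classical Singleton bound $n-k\ge d-1$; recovering $n-k\ge 2(d-1)$ in full generality would demand a more delicate input such as the Rains shadow inequality, which lies outside this cleaner argument.
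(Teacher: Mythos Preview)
The paper does not prove this lemma at all; it is simply quoted from \cite{Feng2010}. So there is no in-paper argument to compare against, and your proposal has to be assessed on its own.

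Your purification-plus-entropy argument for the pure case is the standard Cerf--Cleve/Preskill proof and is correct. The decoupling step is justified exactly as you indicate: the paper's definition of purity gives $\langle v_i|(e_A\otimes I_B)|v_j\rangle=0$ for every non-identity Pauli $e_A$ of weight at most $d-1$, and completeness of the Pauli basis on the $(d-1)$ qudits in $A$ then forces $\rho_{RA}=(I_R/q^k)\otimes(I_A/q^{d-1})$. The entropy chain $S(\rho_{RA})=S(\rho_B)\le(n-d+1)\log q$ is routine. You are also right that this restricted version suffices for the paper, since every quantum code invoked downstream (Theorem~\ref{th4.17}) is produced by Lemma~\ref{le4.15} and is explicitly pure.

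The only caveat is the one you already flag: Lemma~\ref{le4.16} as stated covers \emph{all} $[[n,k,d]]_q$ codes, including degenerate ones, and your argument does not reach that generality---for impure codes the Knill--Laflamme conditions give only $\langle v_i|e|v_j\rangle=c_e\delta_{ij}$ with $c_e$ not necessarily zero, so $\rho_A$ need not be maximally mixed and the entropy count loses a factor of two. Your diagnosis that closing this gap requires something like the Rains shadow inequalities is accurate. So strictly speaking you have proved a weaker statement than the one displayed, though one that is adequate for every use the paper makes of it.
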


If an $[[n,k,d]]_q$ quantum code attains the quantum Singleton bound, i.e., $2(d-1)=n-k$, then it is called a maximum distance separable (MDS for short) quantum code. If its parameters satisfy $2d=n-k$, then it is called an almost maximum distance separable (AMDS for short) quantum code. The following lemma shows that quantum codes over $\mathbb{F}_q$ attaining the quantum Singleton bound cannot exceed a length of $q^2+1$, except in a few sporadic cases, assuming that the classical MDS conjecture holds.

\begin{lemma}\cite{Ketkar2006}\label{Quantum MDS conjecture}
If the classical MDS conjecture holds, then there are no nontrivial MDS quantum codes of lengths exceeding $q^2+1$ except when $q$ is even and $d=4$ or $d=q^2$ in which case $n\leq q^2+2$.
\end{lemma}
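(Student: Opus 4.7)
The plan is to translate an MDS quantum code over $\mathbb{F}_q$ into a classical MDS code over $\mathbb{F}_{q^2}$ and then invoke the (assumed) classical MDS conjecture over the extension field. The main vehicle is the standard Ashikhmin--Knill / Calderbank--Rains--Shor--Sloane correspondence, which associates to any pure $[[n,k,d]]_q$ quantum code an additive classical code $C \subseteq \mathbb{F}_{q^2}^n$ of size $q^{n-k}$ that is self-orthogonal under the trace-Hermitian (equivalently symplectic) inner product and whose symplectic dual has minimum distance at least $d$.

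First, I would show that saturating the quantum Singleton bound $2(d-1) = n - k$ forces the associated classical object to behave as an MDS code of length $n$ and effective dimension $d-1$ over $\mathbb{F}_{q^2}$. Some care is needed here because the associated code is only additive in general; a standard puncturing argument due to Rains reduces the impure case to the pure case, and then a Singleton-type comparison identifies the parameters as those of a classical $[n, d-1, d]_{q^2}$ MDS code.

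Second, I would apply the classical MDS conjecture over $\mathbb{F}_{q^2}$: any nontrivial $[n, k', d']_{q^2}$ MDS code satisfies $n \le q^2 + 1$, with the single well-known exception that if $q^2$ is even (equivalently, $q$ is even) and $k' \in \{3,\, q^2 - 1\}$, then $n$ can reach $q^2 + 2$. Under the translation $k' = d - 1$, these two exceptional dimensions become precisely $d = 4$ and $d = q^2$, giving the two exceptional cases stated in the lemma.

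The main obstacle, and the step that would need the most care in a full write-up, is the first reduction: the classical code produced by the correspondence is additive rather than $\mathbb{F}_{q^2}$-linear, so one must either argue that self-orthogonality combined with attaining the quantum Singleton bound forces a genuine linear structure, or invoke the additive version of the Singleton bound and verify that the additive MDS conjecture coincides with the linear one in the range relevant here. Once this structural translation is in place, the remainder of the argument is a routine bookkeeping reduction to the classical MDS conjecture, with the exceptional lengths $n = q^2 + 2$ falling out exactly from the known exceptions at dimension $3$ and co-dimension $1$ in the even characteristic case.
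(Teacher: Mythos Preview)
The paper does not prove this lemma; it is quoted verbatim from \cite{Ketkar2006} and used as a black box, so there is no ``paper's own proof'' to compare against. Your sketch is essentially the standard argument from that reference: pass from an $[[n,k,d]]_q$ stabilizer code to an additive self-orthogonal code in $\mathbb{F}_{q^2}^n$, observe that the quantum Singleton bound being attained forces classical MDS behaviour over $\mathbb{F}_{q^2}$, and then read off the length bound (with the $q$ even, $d\in\{4,q^2\}$ exceptions) from the classical MDS conjecture applied to $\mathbb{F}_{q^2}$.

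The one point you correctly flag as delicate---that the associated code is only additive, not $\mathbb{F}_{q^2}$-linear---is real and is handled in \cite{Ketkar2006} by working with the additive Singleton bound and Rains's puncturing argument rather than by upgrading to linearity. If you were to write this out in full you would want to follow that route rather than trying to force an $\mathbb{F}_{q^2}$-linear structure, which need not exist. But for the purposes of this paper the lemma is simply an imported result, and no proof is expected.
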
\label{1104}

By Lemma \ref{Quantum MDS conjecture}, it is easy to find that an AMDS quantum code over $\mathbb{F}_q$ with length $n>q^2+1$ is optimal, where $q$ is odd prime power. In the following, we focus on the constructions of MDS and AMDS pure quantum codes.

\begin{theorem}\label{th4.17}
Let $s$ be a positive integer and $s_1,s_2$ be two positive divisors of $s$. If one of the following conditions holds:
\begin{itemize}
  \item [{\rm (i)}] $s_2|s_1$ and $s\geq 2s_1$;
  \item [{\rm (ii)}] $\frac{s_2}{s_1}$ is odd and $2s>s_1+s_2$;
  \item [{\rm (iii)}] $\frac{s_2}{s_1}$ is even and $2s>2s_1+s_2$,
\end{itemize}
then there exists a pure quantum code $\mathcal{Q}$ with parameters
$$\left[\left[\frac{p^{2s}+(p^{s_1}-1)\mathcal{G}^2}{p^{s_1}},\frac{p^{2s}+(p^{s_1}-1)\mathcal{G}^2}{p^{s_1}}-\frac{2s}{s_2}-2,3\right]\right]_{p^{s_2}}.$$
Furthermore, $\mathcal{Q}$ is an MDS pure quantum code if $s=s_2$ and $s_2>2s_1$, and an AMDS pure quantum code if $s=2s_2$ according to the quantum Singleton bound.
\end{theorem}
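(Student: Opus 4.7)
The plan is to derive the quantum code via the generalized Steane construction of Lemma~\ref{le4.15}, applied with $\mathcal{C}_1 = \mathcal{C}_D^\perp$ and $\mathcal{C}_2$ chosen to be the single parity-check code over $\mathbb{F}_{p^{s_2}}$. Under each of the three hypotheses, Theorem~\ref{th4.1} guarantees that $\mathcal{C}_D$ is self-orthogonal and that $\mathcal{C}_D^\perp$ has parameters $[n,n-k,3]_{p^{s_2}}$, where $n = (p^{2s}+(p^{s_1}-1)\mathcal{G}^2)/p^{s_1}$ and $k = \tfrac{2s}{s_2}+1$. The self-orthogonality $\mathcal{C}_D \subseteq \mathcal{C}_D^\perp$ translates to $\mathcal{C}_1^\perp = \mathcal{C}_D \subseteq \mathcal{C}_D^\perp = \mathcal{C}_1$, which is the first containment required by Lemma~\ref{le4.15}.

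Next, I would set
\[\mathcal{C}_2 = \Bigl\{v \in \mathbb{F}_{p^{s_2}}^n : \sum_{i=1}^n v_i = 0\Bigr\},\]
which has parameters $[n, n-1, 2]_{p^{s_2}}$. The key observation driving the entire construction is that by Eq.(\ref{eq1.4}) the all-ones vector $\mathbf{1}$ lies in $\mathcal{C}_D$ (choose $a=b=0$ and $c=1$), so every $v \in \mathcal{C}_D^\perp$ satisfies $\sum_i v_i = \langle v, \mathbf{1}\rangle = 0$; hence $\mathcal{C}_1 \subseteq \mathcal{C}_2$. The dimension hypothesis $k_1+2 \leq k_2$ becomes $k \geq 3$, which is automatic since $k = \tfrac{2s}{s_2}+1 \geq 3$. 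Lemma~\ref{le4.15} then produces a pure quantum code of length $n$, dimension $k_1+k_2-n = n - k - 1 = n - \tfrac{2s}{s_2} - 2$, and minimum distance $\min\bigl\{3,\, \lceil\tfrac{p^{s_2}+1}{p^{s_2}}\cdot 2\rceil\bigr\} = 3$, matching the claimed parameters.

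It then remains to read off the MDS and AMDS statements from the quantum Singleton bound (Lemma~\ref{le4.16}). The slack $n - k_{\text{quantum}} - 2(d-1) = (k+1) - 4 = k - 3$ vanishes exactly when $k = 3$, i.e.\ $s = s_2$, and the AMDS condition $2d = n - k_{\text{quantum}}$ becomes $k = 5$, i.e.\ $s = 2s_2$. A brief case analysis confirms that the stated hypotheses ($s_2 > 2s_1$ with $s=s_2$ in the MDS case, $s=2s_2$ in the AMDS case) are compatible with at least one of the self-orthogonality conditions (i)-(iii) of Theorem~\ref{th4.1}: in the MDS case, $s_2 > 2s_1$ places us in case (ii) or (iii) according to the parity of $s_2/s_1$; in the AMDS case all three cases can occur.

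The main conceptual obstacle—and essentially the only one—is recognizing that the augmentation by $\mathbf{1}$ already built into the definition~(\ref{eq1.4}) of $\mathcal{C}_D$ forces $\mathcal{C}_D^\perp$ to sit inside the parity-check code, supplying the outer code $\mathcal{C}_2$ needed in Steane's construction for free. Everything else is routine verification of the hypotheses of Lemma~\ref{le4.15} together with elementary algebraic comparison against the quantum Singleton bound.
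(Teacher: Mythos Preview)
Your proposal is correct and follows essentially the same route as the paper: set $\mathcal{C}_1=\mathcal{C}_D^\perp$, take $\mathcal{C}_2$ to be the single parity-check code (equivalently, the dual of the repetition code $\{c\mathbf{1}:c\in\mathbb{F}_{p^{s_2}}\}\subseteq\mathcal{C}_D$, which is how the paper phrases it), verify the chain $\mathcal{C}_1^\perp\subseteq\mathcal{C}_1\subseteq\mathcal{C}_2$ via $\mathbf{1}\in\mathcal{C}_D$, and apply Lemma~\ref{le4.15}. Your treatment of the MDS/AMDS cases via the slack $k-3$ in the quantum Singleton bound is in fact cleaner than the paper's, which restricts attention to the case $s_2\mid s_1$ when arguing AMDS.
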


\begin{proof}
By Theorems \ref{th4.1}, it is clear that the dual code $\mathcal{C}_D^\bot$ of $\mathcal{C}_D$ is an $[n,n-(\frac{2s}{s_2}+1),3]$ linear code over $\mathbb{F}_{p^{s_2}}$, where $n=\frac{p^{2s}+(p^{s_1}-1)\mathcal{G}^2}{p^{s_1}}$. If $s,s_1$ and $s_2$ satisfy the self-orthogonal conditions given in Theorem \ref{th4.1}, then $\mathcal{C}_D\subseteq\mathcal{C}_D^\bot$. Let $\mathcal{C}_1=\mathcal{C}_D^\bot$ and $\mathcal{C}_2$ be the dual of the code $\{c\mathbf{1}:c\in\mathbb{F}_{p^{s_2}}\}\subseteq\mathcal{C}_D$, where $\mathbf{1}$ is the all-$1$ vector of length $n$. It is easy to check that $\mathcal{C}_2$ has parameters $\left[n,n-1,2\right]_{p^{s_2}}$. Then we can deduce that $\mathcal{C}_1^\bot\subseteq\mathcal{C}_1\subseteq\mathcal{C}_2$ and ${\rm dim}(\mathcal{C}_1)+2\leq{\rm dim}(\mathcal{C}_2)$. By Lemma \ref{le4.15}, then there exists a pure quantum code $\mathcal{Q}$ with parameters $[[n,n-\frac{2s}{s_2}-2,3]]_{p^{s_2}}$, where $n=\frac{p^{2s}+(p^{s_1}-1)\mathcal{G}^2}{p^{s_1}}$. Under the self-orthogonal conditions given in Theorem \ref{th4.1}, $\mathcal{Q}$ is an $[[n,n-4,3]]_{p^{s_2}}$ MDS pure quantum code if $s=s_2$ and $s_2>2s_1$ according to the quantum Singleton bound given in Lemma \ref{le4.16}. If $s=2s_2$ and $s_2|s_1$, from the self-orthogonal conditions given in Theorem \ref{th4.1}, we have $s_2\geq s_1$. This is a contradiction except that $s_2=s_1$. Hence, if $s=2s_2$, we have $s_1=s_2$, then $\mathcal{Q}$ is an $[[n,n-6,3]]_{p^{s_2}}$ AMDS pure quantum code according to the quantum Singleton bound given in Lemma \ref{le4.16}.
\end{proof}

\begin{remark}
${\rm (i)}$ Compared with the known quantum codes with minimum distance $3$ in~\cite{Bierbrauer2000,Chen2013,Guo2021,Ketkar2006,Li12023,Kai2014,Liu2010,Shi2017,Shi2018,Wang2019,Wang2024,Zhang2018,Zhu2018}, we find that the lengths of these quantum codes are different from the lengths of the quantum codes constructed in Theorem \ref{th4.17}. In addition, from Lemma \ref{Quantum MDS conjecture}, our MDS and AMDS pure quantum codes are optimal.

${\rm (ii)}$ From Theorem \ref{th4.17}, some MDS and AMDS pure quantum codes are listed in Table \ref{tab5}, which have been verified by Magma programs. Compared with the known quantum twisted codes in \cite{Bierbrauer}, we find that our codes have a higher code rate in some cases. For example, let $(p,s,s_1,s_2)=(3,2,1,1)$, then $s=2s_2,s_1|s_2$ and $2s\geq s_1+s_2$. By Theorem \ref{th4.17}, there exists a $[[33,27,3]]_3$ AMDS pure quantum code. However, the best known quantum twisted code with length $33$ and minimum distance $3$ over $\mathbb{F}_3$ has the parameters $[[33,23,3]]_3$ in \cite{Bierbrauer}.
\end{remark}

\begin{table}[h!]
\setlength{\tabcolsep}{5mm}
\caption{Some MDS and AMDS pure quantum codes from Theorem \ref{th4.17} }
\centering
\label{tab5}
 \begin{tabular}{ccccc}
\toprule
$(p,s,s_1,s_2)$&$[[n,k,d]]_{p^{s_2}}$ in Theorem \ref{th4.17}&Optimality&$[[n',k',d']]_{p^{s_2}}$ in \cite{Bierbrauer}\\
\midrule
$(3,2,1,1)$&$[[33,27,3]]_{3}$&AMDS&$[[33,23,3]]_3$\\
$(3,3,1,3)$&$[[225,221,3]]_{3^3}$&MDS&--\\
$(3,4,2,2)$&$[[801,795,3]]_{3^2}$&AMDS&$[[814,795,3]]_{3^2}$\\
$(3,4,1,4)$&$[[2241,2237,3]]_{3^4}$&MDS&--\\
$(5,2,1,1)$&$[[145,139,3]]_5$&AMDS&$[[144,134,3]]_5$\\
$(5,3,1,3)$&$[[3225,3221,3]]_{5^3}$&MDS&--\\
$(5,4,2,2)$&$[[16225,16219,3]]_{5^2}$&AMDS&--\\
$(7,2,1,1)$&$[[385,379,3]]_7$&AMDS&$[[387,375,3]]_7$\\
$(7,3,1,3)$&$[[16513,16509,3]]_{7^3}$&MDS&--\\
\bottomrule
 \end{tabular}
\end{table}
\subsection{New LCD codes from self-orthogonal codes}
%In this subsection, we construct a class of new LCD codes and give the parameters of its dual. Additionally, some LCD codes with new and optimal parameters are listed in Table \ref{tab6} according to the best-known tables maintained at http://www.codetables.de. \cite{Grassl}.

It is known that a linear code $\mathcal{C}$ is called a LCD code if  $\mathcal{C}\cap\mathcal{C}^\bot=\{\mathbf{0}_n\}$, where $\mathbf{0}_n$ denotes the zero vector of length $n$. If $\mathcal{C}$ is an LCD code, then $\mathcal{C}^\bot$ is also a LCD code. In \cite{Massey1992}, Massey first introduced the concept of LCD codes and showed that a linear code is a LCD code if and only if $GG^\top$ is nonsingular, where $G$ is the generator matrix of $\mathcal{C}$. Furthermore, LCD codes have been shown to provide an optimal linear coding solution for the two-user binary adder channel \cite{Massey1992}. In \cite{Massey1998}, Massey presented fundamental properties of orthogonal, antiorthogonal, and self-orthogonal matrices, and established the relationship between these matrices and codes, which is employed to construct LCD codes. In the following, we give some definitions and necessary lemmas derived from  \cite{Massey1998}, which are useful for constructing new LCD codes based on the codes developed in this paper.
\begin{definition}
A matrix $M$ is said to be {\it row-orthogonal} if $MM^\top=I$, and {\it row-self-orthogonal} if $MM^\top=\mathbf{0}$, where $I$ and $\mathbf{0}$ denote the identity matrix and zero matrix.
\end{definition}
\par
The following lemma establishes a connection between LCD codes and row-self-orthogonal matrices.
\par
\begin{lemma}\cite{Heng2023,Massey1998}\label{le5.9}
A leading-systematic linear code $\mathcal{C}$ is an LCD code if its systematic generator matrix $G= [I : M]$ is row-orthogonal, i.e., the matrix $M$ is row-self-orthogonal.
\end{lemma}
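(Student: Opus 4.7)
The plan is to reduce the claim directly to the criterion stated a few lines earlier in the excerpt, namely Massey's characterization that a linear code $\mathcal{C}$ with generator matrix $G$ is LCD if and only if $GG^\top$ is nonsingular. So the whole argument will be an algebraic manipulation of $GG^\top$ under the hypothesis that the right block $M$ is row-self-orthogonal, followed by an invocation of that criterion.

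First I would fix dimensions: if $\mathcal{C}$ is an $[n,k]_q$ linear code with leading-systematic generator matrix $G=[I_k : M]$, then $I_k$ is the $k\times k$ identity and $M$ is $k\times (n-k)$. Writing block-by-block,
\begin{equation*}
GG^\top = [I_k : M]\begin{bmatrix} I_k \\ M^\top \end{bmatrix} = I_k I_k^\top + M M^\top = I_k + M M^\top.
\end{equation*}
This identity is the central computation; everything else is bookkeeping.

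Next I would invoke the hypothesis. By the definition of row-self-orthogonality given just above the lemma, $MM^\top = \mathbf{0}$, so the displayed identity collapses to $GG^\top = I_k$. Since $I_k$ is trivially nonsingular, Massey's criterion (recalled in the paragraph preceding the lemma) immediately yields $\mathcal{C}\cap \mathcal{C}^\bot=\{\mathbf{0}_n\}$, i.e., $\mathcal{C}$ is LCD.

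There is essentially no technical obstacle here: the statement is a one-line corollary of Massey's criterion once $GG^\top$ has been block-expanded. The only thing worth being careful about is notational consistency — using $I_k$ rather than an unsubscripted $I$, confirming that the ``$:$'' in $G=[I:M]$ denotes horizontal concatenation (so that $G^\top$ is the vertical concatenation of $I_k$ and $M^\top$), and verifying that the definition of ``row-self-orthogonal'' in the paper indeed means $MM^\top=\mathbf{0}$ rather than $M^\top M=\mathbf{0}$. Since the lemma is a cited standard fact from \cite{Massey1998,Heng2023}, I would keep the proof to these few lines and not belabor the point.
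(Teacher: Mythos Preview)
Your argument is correct and is exactly the standard one-line derivation from Massey's criterion: $GG^\top=I_k+MM^\top=I_k$ when $M$ is row-self-orthogonal, hence $GG^\top$ is nonsingular and $\mathcal{C}$ is LCD. The paper itself does not supply a proof of this lemma---it is simply quoted from \cite{Heng2023,Massey1998}---so there is nothing to compare against beyond noting that your proof is the expected one.
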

\par
By Lemma \ref{le5.9}, it is evident that an LCD code can be derived from a self-orthogonal code.
\begin{lemma}\label{le5.10}
Let $\mathcal{C}$ be an $[n,k,d]$ linear code over $\mathbb{F}_q$ with generator matrix $G$. If $\mathcal{C}$ is self-orthogonal, then the matrix $\overline{G}=[I_k : G]$ generates an $[n+k,k,d']$ LCD code over $\mathbb{F}_q$.
\end{lemma}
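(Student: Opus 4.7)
The plan is to apply Lemma \ref{le5.9} directly, once the self-orthogonality of $\mathcal{C}$ is repackaged as a row-self-orthogonality statement about $G$. First I would note that $\overline{G}=[I_k : G]$ is already in leading-systematic form: its $k$ rows are linearly independent, because the left-hand block is the $k\times k$ identity matrix, so the code generated by $\overline{G}$ has length $n+k$ and dimension $k$.

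Next I would verify that the right-hand block $G$ is row-self-orthogonal, i.e., $GG^\top = \mathbf{0}$. Let $\mathbf{g}_1,\dots,\mathbf{g}_k$ denote the rows of $G$; they form a basis of $\mathcal{C}$. The self-orthogonality $\mathcal{C}\subseteq\mathcal{C}^\perp$ means that $\langle \mathbf{c}_1,\mathbf{c}_2\rangle=0$ for all $\mathbf{c}_1,\mathbf{c}_2\in\mathcal{C}$, and in particular $\langle \mathbf{g}_i,\mathbf{g}_j\rangle=0$ for every $1\le i,j\le k$. Since the $(i,j)$-entry of $GG^\top$ is exactly $\langle \mathbf{g}_i,\mathbf{g}_j\rangle$, we conclude $GG^\top=\mathbf{0}$, so $G$ is row-self-orthogonal. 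Lemma \ref{le5.9} now applies verbatim to $\overline{G}=[I_k : G]$ and yields that the resulting leading-systematic code is LCD.

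Finally, for the minimum distance $d'$ of the new code, every codeword has the form $(\mathbf{x},\mathbf{x}G)$ with $\mathbf{x}\in\mathbb{F}_q^k$, whose Hamming weight equals $wt(\mathbf{x})+wt(\mathbf{x}G)$. For $\mathbf{x}\ne\mathbf{0}$ we have $wt(\mathbf{x})\ge 1$, and because $G$ has full row rank, $\mathbf{x}G$ is a nonzero codeword of $\mathcal{C}$, so $wt(\mathbf{x}G)\ge d$; hence $d'\ge d+1$, with the precise value recoverable from the weight distribution of $\mathcal{C}$ if needed. There is no real obstacle: the lemma is essentially a translation of the definition of self-orthogonality ($GG^\top=\mathbf{0}$) through Lemma \ref{le5.9}, together with the observation that appending $I_k$ on the left adds exactly $k$ to the length, preserves the dimension, and can only increase the minimum weight.
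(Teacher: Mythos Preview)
Your proposal is correct and follows exactly the approach the paper takes: the paper states the lemma as an immediate consequence of Lemma~\ref{le5.9}, and your argument spells out precisely this---that self-orthogonality of $\mathcal{C}$ gives $GG^\top=\mathbf{0}$, so $\overline{G}=[I_k:G]$ is row-orthogonal and Lemma~\ref{le5.9} applies. Your additional remarks on the dimension and the bound $d'\ge d+1$ are also in line with how the paper uses this lemma downstream in Theorem~\ref{th5.11}.
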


In the following, a class of new LCD codes are given.
\begin{theorem}\label{th5.11}
Let $s$ be a positive integer and $s_1,s_2$ be two positive divisors of $s$. If one of the following conditions holds:
\begin{itemize}
  \item [{\rm(i)}] $s_2|s_1$ and $s\geq 2s_1$;
  \item [{\rm (ii)}] $\frac{s_2}{s_1}$ is odd and $2s>s_1+s_2$;
  \item [{\rm (iii)}] $\frac{s_2}{s_1}$ is even and $2s>2s_1+s_2$,
\end{itemize}
then the matrix $\overline{G}=[I : G_{D}]$ generates an LCD code $\mathcal{C}_{L}$ with parameters
$$\left[\frac{p^{2s}+(p^{s_1}-1)\mathcal{G}^2}{p^{s_1}}+\frac{2s}{s_2}+1,\frac{2s}{s_2}+1,d(\mathcal{C}_L)\geq d(\mathcal{C}_D)+1\right]_{p^{s_2}},$$
where $G_D$ is the generator matrix of $\mathcal{C}_D$ given by Eq.(\ref{matrix:{G_D}}) and $d(\mathcal{C}_D)$ is the minimum distance of $\mathcal{C}_D$ determined by Theorem \ref{th4.1}.
\end{theorem}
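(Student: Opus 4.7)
The plan is to deduce this from Theorem~\ref{th4.1} via Lemma~\ref{le5.10}. The three numerical conditions (i)--(iii) in the statement coincide exactly with the three hypotheses under which Theorem~\ref{th4.1} asserts that $\mathcal{C}_D$ is self-orthogonal over $\mathbb{F}_{p^{s_2}}$. So under the hypothesis of the present theorem, $\mathcal{C}_D$ is an $[n,k,d(\mathcal{C}_D)]_{p^{s_2}}$ self-orthogonal code with $n=\frac{p^{2s}+(p^{s_1}-1)\mathcal{G}^2}{p^{s_1}}$ and $k=\frac{2s}{s_2}+1$, and the matrix $G_D$ displayed in Remark~\ref{re4.3} is a generator matrix whose rows all lie in $\mathcal{C}_D\subseteq\mathcal{C}_D^{\bot}$.

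Writing $\mathbf{g}_1,\ldots,\mathbf{g}_k$ for the rows of $G_D$, self-orthogonality gives $\langle\mathbf{g}_i,\mathbf{g}_j\rangle=0$ for all $i,j$, i.e.\ $G_DG_D^{\top}=\mathbf{0}$. Hence
$$\overline{G}\,\overline{G}^{\top}=I_kI_k^{\top}+G_DG_D^{\top}=I_k,$$
so $\overline{G}$ is row-orthogonal. Because the first $k$ columns of $\overline{G}$ form $I_k$, its rows are linearly independent, so $\overline{G}$ generates a code $\mathcal{C}_L$ of length $n+k=\frac{p^{2s}+(p^{s_1}-1)\mathcal{G}^2}{p^{s_1}}+\frac{2s}{s_2}+1$ and dimension $k=\frac{2s}{s_2}+1$. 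By Lemma~\ref{le5.9} (equivalently, Lemma~\ref{le5.10}), this code is LCD.

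For the distance bound, any codeword of $\mathcal{C}_L$ has the form $\mathbf{c}=\mathbf{a}\,\overline{G}=(\mathbf{a},\,\mathbf{a}G_D)$ for some $\mathbf{a}\in\mathbb{F}_{p^{s_2}}^{k}$. If $\mathbf{a}\neq\mathbf{0}$, then $wt(\mathbf{a})\geq 1$, and because $G_D$ has $k$ linearly independent rows the vector $\mathbf{a}G_D$ is a nonzero codeword of $\mathcal{C}_D$, so $wt(\mathbf{a}G_D)\geq d(\mathcal{C}_D)$. Adding the two contributions from the disjoint coordinate blocks gives
$$wt(\mathbf{c})=wt(\mathbf{a})+wt(\mathbf{a}G_D)\geq 1+d(\mathcal{C}_D),$$
whence $d(\mathcal{C}_L)\geq d(\mathcal{C}_D)+1$, and $d(\mathcal{C}_D)$ is read off from Tables~\ref{tab1}--\ref{tab3}.

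There is no substantial obstacle here: Theorem~\ref{th4.1} already did the heavy lifting (the Gauss-sum analysis underpinning self-orthogonality), Lemma~\ref{le5.10} packages the $[I_k:G_D]$ construction, and the weight bound is elementary from the block structure of $\overline{G}$. The only mildly delicate bookkeeping is matching each of the hypotheses (i)--(iii) to the corresponding self-orthogonality conclusion in Theorem~\ref{th4.1}; the minimum distance of $\mathcal{C}_L$ is only asserted as a lower bound, and no attempt to pin it down exactly is needed for the statement.
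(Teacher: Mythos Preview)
Your proposal is correct and follows essentially the same approach as the paper: the paper's proof simply invokes Theorem~\ref{th4.1} for self-orthogonality and Lemma~\ref{le5.10} for the LCD conclusion, then reads off the length, dimension, and the bound $d(\mathcal{C}_L)\geq d(\mathcal{C}_D)+1$ from the definition of $\mathcal{C}_L$. You have spelled out the mechanism behind Lemma~\ref{le5.10} (the computation $\overline{G}\,\overline{G}^{\top}=I_k$) and the block-structure argument for the distance bound, but these are exactly the details the paper leaves implicit.
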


\begin{proof}
By Theorem \ref{th4.1} and Lemma~\ref{le5.10}, the linear code $\mathcal{C}_L$ generated by $\overline{G}$ is a LCD code if $s,s_1$ and $s_2$ satisfy the self-orthogonal conditions in Theorem~\ref{th4.1}. By the definition of $\mathcal{C}_L$, it is clear that the length of $\mathcal{C}_L$ is $\frac{p^{2s}+(p^{s_1}-1)\mathcal{G}^2}{p^{s_1}}+\frac{2s}{s_2}+1$, the dimension of $\mathcal{C}_L$ is $\frac{2s}{s_2}+1$ and the minimum distance $d(\mathcal{C}_L)\geq d(\mathcal{C}_D)+1$, where $d(\mathcal{C}_D)$ is the minimum distance of $\mathcal{C}_D$ determined by Theorem~\ref{th4.1}.
\end{proof}

From Theorem \ref{th5.11}, we can obtain a class of almost optimal LCD codes.
\begin{corollary}\label{coro5.11}
Let $p$ be an odd prime, $s=2$ and $s_1=s_2=1$. Let $\omega$ be a generator of $\mathbb{F}_{p^s}$. Then linear code $\mathcal{C}_L'$ with the generator matrix $\overline{G}'=[I_5:G_D']$ is a $[p^3+p^2-p+5,5, \geq p^2(p-1)+2]_p$ LCD code, where
\begin{equation}\label{matrix:{G_D}'}
\begin{aligned}
G_{D}'=
\left[
  \begin{array}{ccccc}
    1&1&\cdots &1&1\\
    {\rm Tr}_{1}^2(x_1\omega^0) & {\rm Tr}_{1}^2(x_2\omega^0) & \cdots &{\rm Tr}_{1}^2(x_{n-1}\omega^0) & 0 \\
    {\rm Tr}_{1}^2(x_1\omega)+1 & {\rm Tr}_{1}^2(x_2\omega)+1 & \cdots &{\rm Tr}_{1}^2(x_{n-1}\omega)+1 & 1 \\
    {\rm Tr}_{1}^2(y_1\omega^0) & {\rm Tr}_{1}^2(y_2\omega^0) & \cdots &{\rm Tr}_{1}^2(y_{n-1}\omega^0) & 0 \\
    {\rm Tr}_{1}^2(y_1\omega)+1 & {\rm Tr}_{1}^2(y_2\omega)+1 & \cdots &{\rm Tr}_{1}^2(y_{n-1}\omega)+1 & 1 \\
  \end{array}
\right],
\end{aligned}
\end{equation}
and $D=\{(x_1,y_1),(x_2,y_2)\ldots,(x_n,y_n)=(0,0)\}\subseteq \mathbb{F}_{p^s}^2$. Moreover, the dual code of $\mathcal{C}_L'$ is a $[p^3+p^2-p+5,p^3+p^2-p,3]_p$ almost optimal LCD code according to the sphere packing bound.
\end{corollary}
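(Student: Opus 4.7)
The plan is to specialize Theorem~\ref{th5.11} to $(p,s,s_1,s_2)=(p,2,1,1)$ and then tighten the resulting distance bound by one by exploiting the particular shape of $G_D'$. First I verify the hypotheses: since $s_2\mid s_1$ and $s=2s_1=2$, Theorem~\ref{th4.1}(i) gives that $\mathcal{C}_D$ is self-orthogonal with parameters $[n,5,p^2(p-1)]_p$, where $n=\frac{p^4+(p-1)p^2}{p}=p^3+p^2-p$. The minimum distance is read from Table~\ref{tab1}: when $s_1=s_2=1$ the listed nonzero weights collapse to $\{p^3-p^2,\,p^3-p,\,p^3-2p,\,p^3+p^2-p\}$, of which $p^2(p-1)=p^3-p^2$ is the smallest for $p\ge3$.

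Next I observe that $G_D'$ equals $E\,G_D$ for the elementary matrix $E\in\mathrm{GL}_5(\mathbb{F}_p)$ that implements $\text{row}_3\mapsto\text{row}_3+\text{row}_1$ and $\text{row}_5\mapsto\text{row}_5+\text{row}_1$ on the matrix $G_D$ of Remark~\ref{re4.3}. Consequently $G_D'$ still generates $\mathcal{C}_D$, and $G_D'(G_D')^\top=E(G_DG_D^\top)E^\top=0$ by self-orthogonality; Lemma~\ref{le5.10} applied with $G=G_D'$ then delivers that $\mathcal{C}_L'$ is an LCD code of length $n+5=p^3+p^2-p+5$ and dimension $5$.

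For the distance refinement I write a nonzero codeword of $\mathcal{C}_L'$ as $(v,vG_D')$ with $v\in\mathbb{F}_p^5\setminus\{\mathbf{0}\}$ and split on $\mathrm{wt}(v)$. Since $G_D'$ has full row rank, $vG_D'$ is a nonzero word of $\mathcal{C}_D$ with weight at least $p^2(p-1)$, so the case $\mathrm{wt}(v)\ge2$ is immediate. When $\mathrm{wt}(v)=1$, $vG_D'$ is, up to a nonzero scalar, one of the five rows of $G_D'$, and each such row must have weight at least $p^2(p-1)+1$. Identifying each row with its $(a,b,c)$-parameter in~(\ref{eq1.4}) and invoking Lemma~\ref{le3.5}(ii): row $1$ (the all-ones row) has weight $n$; rows $2$ and $4$ correspond to $(a,b,c)=(1,0,0)$ and $(0,1,0)$ with $\Gamma(s_1)={\rm Tr}_1^2(1)=2\neq0$ and $c=0$, giving $N=p^2$ and weight $p^3-p$; rows $3$ and $5$ correspond to $(\omega,0,1)$ and $(0,\omega,1)$, giving $N\in\{p^2,\,p^2+p\}$ depending on whether ${\rm Tr}_1^2(\omega^2)$ vanishes, so weight $p^3-p$ or $p^3-2p$. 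The required inequality $p^3-2p\ge p^2(p-1)+1$ is equivalent to $(p-1)^2\ge2$, which holds for every odd prime $p\ge3$; this completes the distance bound.

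For the dual, the length $n'=p^3+p^2-p+5$ and dimension $n'-5$ are immediate, and LCD-ness transfers automatically from $\mathcal{C}_L'$. To pin down $d((\mathcal{C}_L')^\bot)=3$ I inspect the columns of $\overline{G}'=[I_5:G_D']$: any two identity columns are independent, each column of $G_D'$ has first coordinate $1$ and so is non-proportional to $e_2,\dots,e_5$, and distinct points of $D$ yield distinct columns of $G_D'$ by non-degeneracy of the trace pairing on $\mathbb{F}_{p^2}$ relative to the basis $\{1,\omega\}$; combined with an explicit three-term dependency this forces the dual distance to be exactly $3$. The ``almost optimal'' label then follows from Lemma~\ref{The sphere packing bound}: for $d=3$ and length $n'$ the bound only permits co-dimension $\ge4$ over $\mathbb{F}_p$ (for $p\ge3$), so our code sits a single co-dimension away from the packing optimum. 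The principal obstacle is establishing $A_2^\bot=0$: one must rule out a column of $G_D'$ coinciding with $e_1$, which reduces to showing that the unique pair $(x_0,y_0)$ with ${\rm Tr}(x_0)={\rm Tr}(y_0)=0$ and ${\rm Tr}(\omega x_0)={\rm Tr}(\omega y_0)=-1$ does not lie in $D$, i.e.\ ${\rm Tr}_1^2(x_0^2)\neq0$; this is where the primitivity of $\omega$ enters essentially, via a short computation with the dual basis of $\{1,\omega\}$.
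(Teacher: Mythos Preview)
Your proposal is correct. The main departure from the paper is in the distance bound $d(\mathcal{C}_L')\ge p^2(p-1)+2$. The paper argues by first identifying the parameter set on which the $\mathcal{C}_D$-part attains its minimum weight $p^2(p-1)$---namely $(a,b)\neq(0,0)$, $\Gamma(s_1)=0$, $\ell+a_1+b_1=0$---and then checks by inspection that these constraints force $\mathrm{wt}(\ell,a_0,a_1,b_0,b_1)\ge2$. You instead split on $\mathrm{wt}(v)$ and, in the $\mathrm{wt}(v)=1$ case, compute each row weight of $G_D'$ directly from Lemma~\ref{le3.5}, reducing the problem to the numerical inequality $p^3-2p\ge p^2(p-1)+1$. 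Both routes are valid; yours avoids the implicit use of the weight-spectrum gap that the paper's phrasing relies on, while the paper's avoids computing the individual row weights.

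For the dual distance, your outline coincides with the paper's column-independence check; the paper makes the key step explicit by solving the system ${\rm Tr}_1^2(x_i)=0$, ${\rm Tr}_1^2(\omega x_i)=-1$ and computing $2{\rm Tr}_1^2(x_i^2)=2(x_i+x_i^p)^2-4x_i^{p+1}=-4x_i^{p+1}\neq0$, which is the ``short computation'' you allude to. Note that only $\omega\notin\mathbb{F}_p$ is used there, not full primitivity. For the upper bound $d^\bot\le3$ the paper simply invokes $d(\mathcal{C}_D^\bot)=3$ from Theorem~\ref{th4.1}(iv), which is quicker than exhibiting a three-term dependency by hand.
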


\begin{proof}
 By the definition of linear code $\mathcal{C}_D$ and Eq.(\ref{matrix:{G_D}}), the matrix $G_D'$ given by Eq.(\ref{matrix:{G_D}'}) is a generator matrix of $\mathcal{C}_D$ defined by Eq.(\ref{eq1.4}) when $s=2$ and $s_1=s_2=1$. By Theorem \ref{th4.1}, the linear code $\mathcal{C}_D$ with generator matrix $G_D'$ is a $[p^3+p^2-p,5,p^2(p-1)]_p$ self-orthogonal code. From Theorem \ref{th5.11}, we know that LCD code $\mathcal{C}_L'$ has parameters $[p^3+p^2-p+5,5,d(\mathcal{C}_L')]_p$, where $d(\mathcal{C}_L')$ is the minimum distances of $\mathcal{C}_L'$.

Let $\mathbf{c}'$ be a codeword of $\mathcal{C}_L$ with the minimum weight, then $\mathbf{c}'$ can be expressed as
\begin{equation*}
\begin{aligned}
\mathbf{c}'&=(\ell,a_0,a_1,b_0,b_1)\overline{G}'=(\ell,a_0,a_1,b_0,b_1,\mathbf{c}),
\end{aligned}
\end{equation*}
where $\ell,a_0,a_1,b_0,b_1\in\mathbb{F}_{p}$, $a=a_0+a_1\omega$, $b=b_0+b_1\omega$ and $\mathbf{c}=\left({\rm Tr}_{1}^2(ax+by)+\ell+a_1+b_1\right)_{(x,y)\in D}$ is a codeword of $\mathcal{C}_D$ with minimum weight.
By Theorem \ref{th4.1}, we know that $wt(\mathbf{c})=d(\mathcal{C}_D)=p^2(p-1)$ if and only if $a,b,\ell+a_1+b_1$ are all in the set $$\{a,b\in\mathbb{F}_{p^2},\ell+a_1+b_1\in\mathbb{F}_{p}:(a,b)\neq(0,0),\,\Gamma(s_1)=0~{\rm and}~\ell+a_1+b_1=0\}.$$
Then we can deduce that the minimum weight $wt(\ell,a_0,a_1,b_0,b_1)$ equals to $2$ when $a_1=b_1=\ell=0$ and $a_0\neq0,b_0\neq0$. Hence, $wt(\ell,a_0,a_1,b_0,b_1)\geq2$. This implies that $$d(\mathcal{C}_L')\geq d(\mathcal{C}_D)+2=p^2(p-1)+2.$$ Therefore, the linear code $\mathcal{C}_L'$ is a $[p^3+p^2-p+5,5,\geq p^2(p-1)+2]_p$ LCD code.

In the following, we determine the parameters of the dual code of $\mathcal{C}_L'$.  By the definition of $\mathcal{C}_L'$,  it is obvious that the length of $\mathcal{C}_L'^\bot$ is $p^3+p^2-p+5$, and the dimension of $\mathcal{C}_L'^\bot$ is $p^3+p^2-p$, and $d(\mathcal{C}_L'^\bot)\leq d(\mathcal{C}_D^{\perp})=3$. Moreover, it is clear to deduce that any two columns in the first $5$ columns of $\overline{G}'$ are $\mathbb{F}_{p}$-linearly independent since $I_5$ is the identity matrix of order $5$, and any column in the first $5$ columns and the last column of $\overline{G}'$ are $\mathbb{F}_{p}$-linearly independent. Hence, in order to prove that the minimum distance of $\mathcal{C}_L'^\bot$ is $3$, we only need to show that any column $\mathbf{m}_1$ in the first $5$ columns of matrix $\overline{G}'$ and any column $\mathbf{m}_2$ in the first $(p^3+p^2-p-1)$ columns of $G_{D}'$ are $\mathbb{F}_{p}$-linearly independent.

If $\mathbf{m}_1\neq(1,0,\ldots,0)^\top$, it is easy to check that $\mathbf{m}_1$ and $\mathbf{m}_2$ are $\mathbb{F}_{p}$-linearly independent. If $\mathbf{m}_1=(1,0,\ldots,0)^\top$, then $\mathbf{m}_1$ and $\mathbf{m}_2$ are $\mathbb{F}_{p}$-linearly dependent if and only if
\begin{equation}\label{System}
\begin{aligned}
\left\{
  \begin{array}{ll}
    x_i+x_i^p=0, & \hbox{} \\
    \omega x_i+\omega^px_i^p+1=0, & \hbox{} \\
    y_i+y_i^p=0, & \hbox{} \\
    \omega y_i+\omega^py_i^p+1=0, & \hbox{} \\
  \end{array}
\right.
\end{aligned}
\end{equation}
where $(x_i,y_i)\in D\backslash\{(0,0)\}$ for some $1\leq i\leq |D|-1$ and $D$ is given in (\ref{eq1.5}). It is easy to check that $x_i=y_i$ if Eq.(\ref{System}) holds for some $1\leq i\leq |D|-1$. Since $(x_i,y_i)\in D$, we can deduce that
$${\rm Tr}_{1}^2(x_i^2+y_i^2)=2{\rm Tr}_{1}^2(x_i^2)=2(x_i+x_i^p)^2-4x_i^{p+1}=-4x_i^{p+1}\neq0,$$
which contradicts the fact that $(x_i,y_i)\neq (0,0)$. Hence, the minimum distance of the dual code $\mathcal{C}_L$ is $3$.

Obviously, we have
$$p^{n-(p^3+p^2-p+2)}<1+n(p-1),$$
where $n=p^3+p^2-p+5$.
Then from the sphere packing bound given in Lemma \ref{The sphere packing bound}, we know that
 there does not exist a  $[p^3+p^2-p+5,p^3+p^2-p+2,3]_p$ code. Hence, the dual code $\mathcal{C}_L'^\bot$ is a $[p^3+p^2-p+5,p^3+p^2-p,3]_p$ almost optimal LCD code.
\end{proof}

Below, we present an example that support the results of Theorem \ref{th5.11} and Corollary \ref{coro5.11}, which are verified by Magma programs.
\begin{example}
Let $(p,s,s_1,s_2)=(3,2,1,1)$. Then $s_2|s_1$ and $s\geq2s_1$ (or $\frac{s_2}{s_1}$ is odd, $2s\geq s_1+s_2$). By Theorem \ref{th4.1} and Magma programs, $\mathcal{C}_D$ is a $[33,5,18]_3$ self-orthogonal code with generator matrix
\begin{align*}
\setlength{\arraycolsep}{1.2pt}
\begin{array}{lc}
G_D=\left[
      \begin{array}{c}
        \mathbf{r}_1 \\
        \mathbf{r}_2 \\
        \mathbf{r}_3 \\
        \mathbf{r}_4 \\
        \mathbf{r}_5 \\
      \end{array}
    \right]
=
     \begin{bmatrix}
  \begin{array}{ccccccccccccccccccccccccccccccccc}
      1& 1& 1& 1& 1& 1& 1& 1& 1& 1& 1& 1& 1& 1& 1& 1& 1& 1& 1& 1& 1& 1& 1& 1& 1& 1& 1& 1& 1& 1& 1& 1& 1\\
      2& 2& 1& 1& 1& 1& 1& 0& 0& 1& 1& 1& 1& 1& 1& 1& 2& 2& 2& 2& 2& 0& 0& 2& 2& 2& 2& 2& 0& 0& 0& 0& 0\\
      1& 1& 0& 0& 0& 0& 0& 1& 1& 1& 1& 1& 1& 1& 2& 2& 0& 0& 0& 0& 0& 2& 2& 2& 2& 2& 2& 2& 0& 0& 0& 0& 0\\
      0& 0& 1& 1& 2& 2& 0& 2& 1& 1& 1& 2& 2& 0& 0& 0& 1& 1& 2& 2& 0& 2& 1& 1& 1& 2& 2& 0& 1& 1& 2& 2& 0\\
      1& 2& 0& 1& 0& 2& 0& 1& 2& 0& 1& 0& 2& 0& 1& 2& 0& 1& 0& 2& 0& 1& 2& 0& 1& 0& 2& 0& 0& 1& 0& 2& 0\\
  \end{array}
\end{bmatrix}
\end{array}.
\end{align*}
By Theorem \ref{th5.11}, the matrix $\overline{G}=[I_5:G_D]$ generates a LCD code $\mathcal{C}_L$ with parameters $[38,5,19]_3$. By Corollary \ref{coro5.11}, we can get that the linear code $\mathcal{C}_D'$ with the generator matrix $G_D'=[\mathbf{r}_1,\mathbf{r}_2,\mathbf{r}_3+\mathbf{r}_1,\mathbf{r}_4,\mathbf{r}_5+\mathbf{r}_1]$ is a $[33,5,18]_3$ self-orthogonal code. Then the matrix $\overline{G}'=[I_5: G_D']$ generates a LCD code $\mathcal{C}_L'$ with parameters $[38,5,20]_3$. Moreover, the dual code of $\mathcal{C}_L'$ is a $[38,33,3]_3$ almost optimal LCD code according to the sphere packing bound given in Lemma \ref{The sphere packing bound}.
\end{example}

\begin{remark}
Compared with known infinite
families of LCD codes constructed by \cite{Heng2023,Huang2023,Li12017,Li2018,Li2019,Heng12024,Mesnager2021,Wang2024,Wu2021,Yan2018}, we have not identified any class of LCD codes with the same parameters as those in Theorem \ref{th5.11} and Corollary \ref{coro5.11}.
\end{remark}

\section{Concluding remarks}
In this paper, we investigated a class of self-orthogonal linear codes using the defining-set approach. The main results are summarized as follows.
\begin{itemize}
\item By utilizing the Gaussian sums over finite fields, we determined the weight distribution of the linear code $\mathcal{C}_D$ defined by Eq.(\ref{eq1.4}), and showed that $\mathcal{C}_D$ is self-orthogonal under some
  certain conditions. Furthermore, the parameters of dual code $\mathcal{C}_D^\bot$ were determined and a class of AMDS codes from their duals are obtained. (See Theorem \ref{th4.1}).
\item By utilizing the self-orthogonal property of $\mathcal{C}_D$, we obtained a class of new quantum codes, which are MDS or AMDS according to the quantum Singleton bound under certain conditions. Furthermore, we obtained a new class of LCD codes, which are almost optimal according to the sphere packing bound. (See Theorems \ref{th4.17}, \ref{th5.11} and Corollary \ref{coro5.11}).
\item According to the tables of best codes known in \cite{Grassl}, some optimal and AMDS linear codes are listed in Table \ref{tab4}. Moreover, some MDS and AMDS pure quantum codes are listed in Table \ref{tab5}.
\end{itemize}

Comparing our constructed linear code with the codes in \cite{Ding2020,Heng2017,Heng2020,Heng2023,Heng2024,Li2016,Li12023,Heng12024,Liu2019,Sun20241,Zheng2015,Zheng2021} and the references therein,
we have not yet found linear codes with the same parameters and weight distribution as in this paper. In this sense, the linear codes constructed in this paper are new.

\section*{References}
\bibliographystyle{model1a-num-names}

\end{document}